\newtheorem{definition}{Definition}
\def\BibTeX{{\rm B\kern-.05em{\sc i\kern-.025em b}\kern-.08em
    T\kern-.1667em\lower.7ex\hbox{E}\kern-.125emX}}
  \newtheorem{theorem}{Theorem}%[section]
\newtheorem{corollary}{Corollary}
\newtheorem{proposition}{Proposition}
\newtheorem{remark}{Remark}
\newcommand{\bieee}{\begin{IEEEeqnarray}{rCl}}
\newcommand{\eieee}{\end{IEEEeqnarray}}
\begin{document}

\title{Basis-Spline Assisted Coded Computing: Strategies and Error Bounds}

\author{\IEEEauthorblockN{Rimpi Borah$^{*}$, J. Harshan$^{*}$ and V. Lalitha$^{\dagger}$}
\IEEEauthorblockA{$^{*}$Indian Institute of Technology Delhi, India,\\ $^{\dagger}$International Institute of Information Technology, Hyderabad, India}}

\maketitle

\begin{abstract}
Coded computing has emerged as a key framework for addressing the impact of stragglers in distributed computation. While polynomial functions often admit exact recovery under existing coded computing schemes, non-polynomial functions require approximate reconstruction from a finite number of evaluations, posing significant challenges. Consequently, interpolation-based methods for non-polynomial coded computing have gained attention, with Berrut approximated coded computing emerging as a state-of-the-art approach. However, due to the global support of Berrut interpolants, the reconstruction accuracy degrades significantly as the number of stragglers increases. To address this challenge, we propose a coded computing framework based on cubic B-spline interpolation. In our approach, server-side function evaluations are reconstructed at the master using B-splines, exploiting their local support and smoothness properties to enhance stability and accuracy. We provide a systematic methodology for integrating B-spline interpolation into coded computing and derive theoretical bounds on approximation error for certain class of smooth functions. Our analysis demonstrates that the error bounds of our approach exhibit a faster decay with respect to the number of workers compared to the Berrut-based method. Experimental results also confirm that our method offers improved accuracy over Berrut-based methods for various smooth non-polynomial functions.
\end{abstract}

\begin{IEEEkeywords}
Coded Computing, Basis-spline, Approximation methods, Stragglers, Interpolation techniques.
\end{IEEEkeywords}

\section{Introduction}

Distributed computing is an effective paradigm for executing large-scale computational tasks, wherein a master coordinates the computation by  distributing the dataset among several workers \cite{a1}. A major performance bottleneck in such systems is the presence of \emph{stragglers}, i.e., slow or unresponsive workers, which can significantly delay the overall computation. To address this challenge, coded computing has been proposed as a fault-tolerant approach by introducing structured redundancy into distributed computations. This idea has been successfully applied to problems such as distributed matrix multiplication \cite{c1,w1} and polynomial function computation \cite{w2,w3,b5}. However, many existing coded computing schemes such as Lagrange Coded Computing \cite{w2}, Analog Lagrange Coded Computing \cite{b5} are limited to polynomial computations and rely on exact interpolation, which can be numerically unstable over real-valued data \cite{b6} and often require a large number of workers. In contrast, \cite{f1} proposed a numerically stable coded computing framework based on Chebyshev polynomials, however, this approach remains restricted to polynomial computations and due to its reliance on exact interpolation, still requires a large number of workers as the degree of polynomial increases.

To address these challenges, Berrut Approximated Coded Computing (BACC) \cite{b3} was proposed as a numerically stable framework for approximately computing arbitrary functions. BACC employs Berrut rational functions for both data encoding and reconstruction, thereby extending existing coded computing frameworks beyond polynomial functions while providing provable approximation error guarantees. Although BACC improves numerical stability, its reconstruction procedure relies on global approximation, which limits the achievable approximation order as the number of stragglers increases. To alleviate this problem, \cite{z5} explored a learning-theoretic framework to coded computing, wherein the underlying encoding and reconstruction operations were posed as regression problems, and were subsequently solved using data-centric methods. Interestingly, \cite{z5} was shown to outperform \cite{b3} in terms of relative error when handling non-polynomial functions. Although \cite{z5} is the state-of-the art in terms of accuracy metric, it is accompanied by other implementation overheads such as dataset management and model-training. Thus, within the class of non data-centric methods, BACC is still the state-of-the-art, and therefore, there is a need to investigate new structured coded computing methods that outperform numerical accuracy of BACC while explicitly characterizing the associated trade-offs. 

\subsection{Contributions}
\label{subsec:contributions}

Basis splines (B-splines) have been extensively studied in approximation theory and numerical analysis as a powerful tool for interpolation and function approximation \cite{z9,z1,z2,b2,b7}. In particular, they possess several desirable properties, including local support, higher-order approximation capability, and strong numerical stability. Despite this rich theoretical foundation, spline-based methods have not been systematically explored in the context of coded computing, where existing approaches continue to rely predominantly on polynomial or rational interpolation schemes. To bridge this gap, in this work, we propose a B-spline assisted coded computing (BSCC) framework for distributed computation of arbitrary functions in the presence of stragglers. Unlike global interpolation-based schemes, BSCC employs B-spline basis functions to perform local, piecewise-polynomial approximation of the target function, enabling higher-order approximation accuracy and improved numerical stability. 

Our specific contributions are listed below: For coded computing frameworks, we formulate a reconstruction methodology based on cubic B-spline interpolation. Our methodology explicitly specifies the construction of the spline space, the use of clamped knot vectors, and the natural boundary conditions, resulting in a well-posed linear system for reconstruction. We analyze the structure of the resulting B-spline interpolation system, and demonstrate that the reconstruction can be carried out efficiently using banded linear solvers, with computational complexity that scales linearly with the number of workers (see Section \ref{sec:basic BSCC}). We establish explicit bounds on the operator norm associated with the proposed B-spline reconstruction and connect these bounds to the overall approximation error. This analysis yields a precise characterization of how the reconstruction error depends on the number of workers and stragglers. In particular, for a class of smooth non-polynomial functions, we show that the approximation error of BSCC exhibits an inverse-quadratic decay with respect to the number of workers (see Section \ref{sec:analytical results}). Finally, through experimental results, we show that BSCC outperforms BACC in terms of approximation accuracy for a broader class of smooth non-polynomial functions, however, with a marginal increase in the computational complexity during the reconstruction process (see Section \ref{sec:exp results}).

%Finally, through both theoretical comparison and experimental results, we show that BSCC outperforms BACC in accuracy when handling non-polynomial functions, with a marginal increase in the computational complexity during the reconstruction process (see Section \ref{sec:exp results}).

%We have omitted the proofs of all the results due to space constraints. These can be found at \cite{our_arxiv}.

%We establish rigorous approximation error bounds for the proposed BSCC framework and show that, when cubic B-splines are employed, the approximation error decays as $\mathcal{O}\!\left((\frac{s}{N})^4\right)$, which is asymptotically smaller than the $\mathcal{O}(\frac{s^3}{N})$ decay achieved by BACC, where $s$ denotes the number of stragglers. We further validate our theoretical results through extensive simulations, demonstrating that the proposed \emph{B-Spline-based coded computing} framework consistently outperforms BACC in terms of average relative error under straggler scenarios.

 %%% mention sufficiently smooth function********************
 %%%% mention learning theoretic paper 

%%%%%%%%%%%%%%%%%%%%%%%%%%%%%%%%%%%%%%%%%%%%%%%%%%%%%%%%%%%%%%%%%%%%%%%%%%%%%%%%%%%%%%%%%%%%%%%%%%%
%PRELIMINARY ON SPLINE:

\section{Preliminaries on B-Splines}
\label{subsec:prelim}
In this section, we revisit some basic definitions and fundamental properties of B-spline functions used in the our construction and analysis. To define B-spline functions, we define knot vectors, which capture the support set of B-splines. 

\begin{definition}
\label{def:knot_sequence}
A knot vector is a finite non-decreasing sequence of real numbers, denoted by
$\mathbf{t}=\{t_i\}_{i=0}^{m-1}$, such that $t_0 \leq t_1 \leq \cdots \leq t_{m-1}$,
where $m\in\mathbb{N}$. The elements of $\mathbf{t}$ are called knots.
\end{definition}

\begin{definition}
\label{def:cox de boor}
Let $p\ge 0$, and let $\mathbf{t}=\{t_j\}_{j=0}^{m-1}$ be a knot vector. Provided that $m\geq p+2$, the B-spline basis functions of degree $p$, denoted by $\{B_{j,p}(z)\mid j=0,1,\ldots,m-p-2\}$,
are defined recursively over the knot vector $\mathbf{t}$ \cite{z9} as follows:

\noindent For $p=0$,
\begin{small}
\[B_{j,0}(z)=
\begin{cases}
1, & z\in[t_j,t_{j+1}),\\
0, & \text{otherwise},
\end{cases}
\]\end{small} 

\noindent and for $p\geq 1$,
\[B_{j,p}(z)=\frac{z-t_j}{t_{j+p}-t_j} B_{j,p-1}(z)+\frac{t_{j+p+1}-z}{t_{j+p+1}-t_{j+1}} B_{j+1,p-1}(z).\]
If a denominator is zero, the corresponding term is defined to be zero.
\end{definition}
%%%%%%%%%%%%%%%%%%%%%%%%%%%%%%%%%%%%%%%%%%%%%%%%%%%%%%%%%%%%%%%%%%%%%%%%%%%%%
Each B-spline basis function $B_{j,p}(z)$ satisfies $B_{j,p}(z)\ge 0$ for all $z$ and $B_{j,p}(z)=0$ for $z\notin[t_j,t_{j+p+1})$. Consequently, each $B_{j,p}(z)$ has compact support and is nonzero only on the union of at most $p+1$ consecutive knot intervals. In the context of interpolation and approximation, B-spline basis functions are also defined over a special class of knot vectors called clamped knot vectors \cite{z1}, which are formally defined below.
%%%%%%%%%%%%%%%%%%%%%%%%%%%%%%%%%%%%%%%%%%%%%%%%%%%%%%%%%%%%%%%%%%

\begin{definition}
\label{def:clamped_knot_vector}
Let $[a,b]\subset\mathbb{R}$ be a compact interval and let $p\ge 0$ and $n> p$ be integers. Let $ a=t_0 < t_1 < \cdots < t_{n-1}=b$ be a strictly increasing sequence of real numbers. A knot vector $\mathbf{t}$ is said to be \emph{$(p+1)$-clamped} on $[a,b]$ if it is defined as
\[\mathbf{t}=[\underbrace{t_{0},\ldots,t_{0}}_{p+1},
t_1,\ldots,t_{n-2},
\underbrace{t_{n-1},\ldots,t_{n-1}}_{p+1} ].\]
%The resulting knot vector has length $m=n+2p$.
%where $t_{0}=a$ and $t_{m-1}=b$.
\end{definition}

%%%%%%%%%%%%%%%%%%%%%%%%%%%%%%%%%%%%%%%%%%%%%%%%%%%%%%%%%%%%%%%%%%%%%%%%%%%%%%%%%%%%%%%%%%%%%%%%%%%%%%%%%%%%%%%%%%%%%%%%%%
\begin{definition}
\label{def:spline_space_bspline}
For $p\!\geq \!0$ and $m\!\geq\! p\!+\!2$, let $\{B_{j,p}(z)\mid j=0,1,\ldots,m\!-\!p\!-\!2\}$ denote the B-spline basis functions of degree $p$ defined with respect to a knot vector $\mathbf{t}=\{t_i\}_{i=0}^{m-1}$.  Using these basis functions a spline space denoted by $\mathcal{S}_{p,\mathbf{t}}$, can be generated as\cite{z2}

\begin{small}
\begin{equation}
\mathcal{S}_{p,\mathbf{t}}=\bigg\{\sum_{j=0}^{m-p-2} C_{j,p}\, B_{j,p}(z)\; \forall \;C_{j,p}\in\mathbb{R}
\bigg\}.
\end{equation}
\end{small}
\end{definition}

The B-spline basis functions $\{B_{j,p}(z)\mid j=0,1,\ldots,m\!-\!p\!-\!2\}$ satisfy the partition of unity property, i.e., $\sum_{j=0}^{m-p-2} B_{j,p}(z)=1$, over its support set.
\section{B-Spline Assisted Coded Computing}
\label{sec:basic BSCC}
We propose a B-spline assisted coded computing (BSCC) framework for approximate computation of an arbitrary function $f(\cdot)$ over a given input dataset. Our setup consists of a master and $N$ workers indexed by the set $\mathcal{W}=\{\mathcal{W}_{0},\mathcal{W}_{1},\ldots,\mathcal{W}_{N-1}\}$. We assume that each worker communicates exclusively with the master through a dedicated communication link, and no communication is permitted among workers. The input dataset is stored at the master, while all workers are assumed to have prior knowledge of the function to be evaluated. Among the workers, we assume the presence of $S$ number of stragglers, for $1 \leq S<N-2$, which may fail to return their computation results within the prescribed time. Let $\mathbf{X}=(\mathbf{X}_0,\ldots,\mathbf{X}_{K-1})$ denote the dataset held by the master, where $\mathbf{X}_j \in \mathbb{R}^{m\times n}$ for all $j \in [K]$, with $[K]\triangleq\{0,1,2,\ldots,K-1\}$. The primary objective of BSCC framework is to approximately evaluate an arbitrary function $f:\mathbb{R}^{m\times n}\rightarrow\mathbb{R}^{m\times n}$ on the matrices in $\mathbf{X}$ in a decentralized manner while guaranteeing tolerable approximation errors against $S$ number of stragglers. In particular, since the function $f(\cdot)$ may be non-polynomial, BSCC aims to enable numerically stable distributed computation of $f(\cdot)$ on $\mathbf{X}_j$ for all $j\in[K]$ with bounded approximation error; that is, if $\mathbf{Y}_j$ denotes the output reconstructed from the distributed computation corresponding to $\mathbf{X}_j$, then  $\mathbf{Y}_j \approx f(\mathbf{X}_j)$ for all $j\in[K]$ within a tolerable accuracy loss. Within this setting, we employ a generalized encoding method available in the literature and propose a B-spline assisted numerically stable reconstruction procedure at the master.\footnote{Coded computing under additional constraints such as data privacy and Byzantine adversaries are well studied. In this work, we restrict our attention to resilience against stragglers. Privacy guarantees and Byzantine robustness are straightforward to generalize using existing techniques.} 
%%%%%%%%%%%%%%%%%%%%%%%%%%%%%%%%%%%%%%%%%%%%%%%%%%%%%%%%%%%%%%%%%%%%%%%%%%
\subsection{Encoding }
\label{subsec:encoding}
We describe the encoding method, specifically highlighting how the dataset $\mathcal{X}$ is mapped into distributed shares for the workers. To encode the dataset, the master may employ any suitable interpolatory basis functions $\{\Phi_j(\cdot)\}$, $j\in[K]$, such as the Lagrange basis \cite{w2} or the Berrut basis functions \cite{b3}. These basis functions satisfy the cardinality property, namely $\Phi_j(\beta_i)=1$ for $j=i$ and $\Phi_j(\beta_i)=0$ for $j\neq i$, for the underlying $K$ ordered non-decreasing encoding points $\{\beta_{i} \in \mathbb{R}~|~i = 0, 1, \ldots, K-1\}$. Using this property, the dataset $\mathcal{X}$ is encoded as a function of an evaluation variable $z$ as 
\begin{equation}
\label{eq:u(z)_encoding}
u(z) = \sum_{j=0}^{K-1} \mathbf{X}_j \Phi_j(z),
\end{equation}
where $\{\mathbf{X}_j\}_{j=0}^{K-1} \subset \mathcal{X}$ denote the data blocks and $\{\Phi_j(\cdot)\}$ are the chosen basis functions. 
%%%%%%%%%%%%%%%%%%%%%%%%%%%%%%%%%%%%%%%%%%%%%%%%%%%%%%%%%%%%%%%%%%%%%%%%%%%%%%%%%%%%%%%%%%%%%%%

\subsection{Distribution of Shares among the Workers}
\label{subsec:distribution of shares}

Once \(u(z)\) is constructed as in \eqref{eq:u(z)_encoding}, the master selects a set of \(N\) distinct and ordered evaluation points \(\{\alpha_i\}_{i=0}^{N-1} \subset \mathbb{R}\), such that \(\alpha_0 < \alpha_1 < \cdots < \alpha_{N-1}\). For instance, these points could be chosen as Chebyshev nodes of the first kind or second kind. Using these evaluation points, the master generates the encoded shares by evaluating the encoding function \(u(z)\) at each \(\alpha_i\), resulting in the set of function values \(\{\mathbf{Y}_i\}_{i=0}^{N-1}\), where $\mathbf{Y}_i = u(\alpha_i), \quad i\in[N].$ Finally, the master distributes the encoded shares to the workers by assigning the share \(\mathbf{Y}_i\) to worker \(\mathcal{W}_i\), for \(i\in[N]\). 
%%%%%%%%%%%%%%%%%%%%%%%%%%%%%%%%%%%%%%%%%%%%%%%%%%%%%%%%%%%%%%%%%%%%%%%%%%%%%%%%%%%%%%%%%%%%%%%
%\begin{remark}
%Although we choose the encoding points $\{\beta_j\}_{j=0}^{K-1}$ and the evaluation points $\{\alpha_i\}_{i=0}^{N-1}$ as Chebyshev nodes of the first kind, the proposed framework is not restricted to this choice. The encoding and evaluation points may be selected as any distinct points in the interval $[-1,1]$.
%\end{remark}

\subsection{Computation at the Workers}
\label{subsec:computation at worker}

Once the worker $\mathcal{W}_{i}$ receives its evaluation $\mathbf{Y}_{i}$, it computes $f(\mathbf{Y}_{i})=f(u(\alpha_{i}))$, and returns it to the master. In contrast, the stragglers may not return these values within the prescribed deadline. 

\subsection{Function Reconstruction using B-spline}
\label{subsec:our framework cubic}
In the absence of stragglers, i.e., when $S=0$, the master receives $N$ function evaluations $\{ f(u(\alpha_i)) \mid i=0,1,\ldots,N-1 \}$ from $N$ workers, corresponding to the ordered evaluation points $\{\alpha_i\}_{i=0}^{N-1}$. The objective of the master is to reconstruct an approximation of the composite function $f(u(z))$ over the interval spanned by these points. Since the function $f(\cdot)$ is arbitrary and may be non-polynomial, exact interpolation~\cite{b5} is generally not feasible or numerically stable. Consequently, we approximate $f(u(z))$ using a B-spline function $r_{B-spline}(z) \approx f(u(z))$, where 
\begin{equation}
\label{bspline-approx}
r_{B-spline}(z) = \sum_{j} C_{j,p}\, B_{j,p}(z),
\end{equation}
where $\{ B_{j,p}(z)\}$ are B-spline basis functions of degree $p$, and the coefficients $\{C_{j,p}\}$ are obtained using an appropriate interpolation method using the $N$ function evaluations $\{ f(u(\alpha_i)) \mid i=0,1,\ldots,N-1 \}$. Using \eqref{bspline-approx}, the desired function $f(\mathbf{X}_{j})$ are reconstructed by evaluating the spline at the points $\{\beta_{j}\}_{j=0}^{K-1}$
as $\mathbf{Y}_{j} = r_{B-spline}(\beta_{j})$. Note that these are approximate evaluations, and the accuracy of this reconstruction depends on the reconstruction accuracy of $r_{B-spline}(z)$ in \eqref{bspline-approx}. In the next section, we present our interpolation method to choose the set $\{C_{j,p}\}$ using cubic B-splines, i.e., $p = 3$.\footnote{In approximation theory, it is well known that among various degrees of B-spline, cubic B-splines achieve a good balance between the accuracy of approximation and computational complexity for interpolation.}

\subsection{Cubic B-Spline Interpolation for BSCC}
\label{subsec:framework:clamped B-spline}
With the use of cubic B-splines, i.e., $\!p =\!3$, let \(\{B_{j,3}(z)\}_{j=0}^{N+1}\) denote the set of cubic B-spline basis functions, where the spline space $\mathcal{S}_{3,\mathbf{t}}$ is constructed over the $4$-clamped knot vector $\mathbf{t}$ given by
\begin{equation}
\label{eq:cubic_knots}
\mathbf{t}=\bigl[\underbrace{\alpha_0,\ldots,\alpha_0}_{4},
\alpha_1,\alpha_2,\ldots,\alpha_{N-2},
\underbrace{\alpha_{N-1},\ldots,\alpha_{N-1}}_{4}\bigr],
\end{equation}
where \(\{\alpha_i\}_{i=0}^{N-1}\) are the distinct and ordered evaluation points for BSCC. Given \(N\) function evaluations \(\{ f(u(\alpha_i)) \mid i=0,1,\ldots,N-1 \}\), we approximate the composite function \(f(u(z))\) by a cubic spline \(s(z)\in\mathcal{S}_{3,\mathbf{t}}\) of the form
\begin{equation}
\label{eq:cubic_spline_expansion}
s(z)=\sum_{j=0}^{N+1} C_{j,3}\, B_{j,3}(z),
\end{equation}
where \(\{C_{j,3}\}_{j=0}^{N+1}\) are the spline coefficients that need to be computed. Towards that direction, the interpolation conditions
$s(\alpha_i)=f(u(\alpha_i)), \quad i=0,\ldots,N-1,$ lead to a system of linear equations given by $\mathbf{B}_{rect}\mathbf{c} = \mathbf{f},$ where the rectangular B-spline matrix
\(\mathbf{B}_{\mathrm{rect}}\in\mathbb{R}^{N\times(N+2)}\) given by 
\begin{equation}
\label{eq:B_rect}
\begin{aligned}
&
\mathbf{B}_{rect} = \begin{bmatrix}
B_{0,3}(\alpha_0)  & B_{1,3}(\alpha_0)  & \cdots & B_{N+1,3}(\alpha_0) \\
B_{0,3}(\alpha_1)  & B_{1,3}(\alpha_1)  & \cdots & B_{N+1,3}(\alpha_1) \\
\vdots             & \vdots             & \ddots & \vdots \\
B_{0,3}(\alpha_{N-1}) & B_{1,3}(\alpha_{N-1}) & \cdots & B_{N+1,3}(\alpha_{N-1}) \\
\end{bmatrix}
\end{aligned}
\end{equation}
has full row-rank \(N\), and the vectors $\mathbf{c}$ and $\mathbf{f}$ are given by
\begin{equation}
\begin{aligned}
&
\mathbf{c} = \begin{bmatrix}
C_{0,3} \\
C_{1,3} \\
\vdots \\
C_{N+1,3}
\end{bmatrix},
\mathbf{f} = \begin{bmatrix}
f(u(\alpha_0)) \\
f(u(\alpha_1)) \\
\vdots \\
f(u(\alpha_{N-1}))\\
\end{bmatrix}.
\end{aligned}
\end{equation}
Consequently, the resulting linear system is underdetermined and admits infinitely many solutions. To obtain a unique solution, \(p\!-1\!=2\) additional constraints are required. Therefore, we impose the natural spline boundary conditions $s''(\alpha_0)=0 \quad \text{and} \quad s''(\alpha_{N-1})=0,$ where  $s''(z)$ denotes the double derivative of $s(z)$. 
Since $s''(z)=\sum_{j=0}^{N+1} C_{j,3}\, B''_{j,3}(z),$ these boundary conditions yield two additional linear constraints on the spline coefficients \cite{b7}. Evaluating them gives
\[\sum_{j=0}^{N+1} C_{j,3} B''_{j,3}(\alpha_0)=0\quad \text{and} \quad\sum_{j=0}^{N+1} C_{j,3} B''_{j,3}(\alpha_{N-1})=0.\]
By augmenting the rectangular system with these two boundary constraints, we obtain an $(N+2) \times (N+2)$ square B-spline matrix.
The resulting system of linear equations can be written as $\mathbf{B}\mathbf{c} = \bar{\mathbf{f}}$, where $\bar{\mathbf{f}} = [0 ~\mathbf{f}^{T} ~0]^{T}$ and 
%%%%%%%%%%%%%%%%%%%%%%%%%%%%%%%%%%%%%%%%%%%%%%%%%%%%%%%%%%%%%%%%%%%%%%%%%%%%%%%%%%%%%%%%%%%%%%%%%%%%%%%%
%%long B matrix
\begin{comment}
\begin{equation}
\label{eq:square_bspline_matrix}
\begin{aligned}
&
\mathbf{B} = \begin{bmatrix}
B''_{0,3}(\alpha_0) & B''_{1,3}(\alpha_0) & \cdots & B''_{N+1,3}(\alpha_0) \\
B_{0,3}(\alpha_0)  & B_{1,3}(\alpha_0)  & \cdots & B_{N+1,3}(\alpha_0) \\
B_{0,3}(\alpha_1)  & B_{1,3}(\alpha_1)  & \cdots & B_{N+1,3}(\alpha_1) \\
\vdots             & \vdots             & \ddots & \vdots \\
B_{0,3}(\alpha_{N-1}) & B_{1,3}(\alpha_{N-1}) & \cdots & B_{N+1,3}(\alpha_{N-1}) \\
B''_{0,3}(\alpha_{N-1}) & B''_{1,3}(\alpha_{N-1}) & \cdots & B''_{N+1,3}(\alpha_{N-1})
\end{bmatrix}.
\end{aligned}
\end{equation}
\end{comment}
%%%%%%%%%%%%%%%%%%%%%%%%%%%%%%%%%%%%%%%%%%%%%%%%%%%%%%%%%%%%%%%%%%%%%%%%%%%%%%%%%%%%%%%%%%%%%%%%%%%%%%%%%%%
%short form of B matrix
\begin{equation}
\label{eq:square_bspline_matrix}
\mathbf{B}=\begin{bmatrix}
\mathbf{a} \\
\mathbf{B}_{\mathrm{rect}} \\
\mathbf{b}
\end{bmatrix},
\end{equation}

\noindent where the row vectors $\mathbf{a}, \mathbf{b} \in \mathbb{R}^{1 \times (N+2)}$are given by
\[\mathbf{a}=\bigl[B''_{0,3}(\alpha_0),\;B''_{1,3}(\alpha_0),\;\ldots,\;B''_{N+1,3}(\alpha_0)\bigr],\]
\[\mathbf{b}=\bigl[B''_{0,3}(\alpha_{N-1}),\;
B''_{1,3}(\alpha_{N-1}),\;\ldots,\;B''_{N+1,3}(\alpha_{N-1})\bigr].\]

%%%%%%%%%%%%%%%%%%%%%%%%%%%%%%%%%%%%%%%%%%%%%%%%%%%%%%%%%%%%%%%%%%%%%%%%%%%%%%%%%%%%%%%%%
The two additional zero entries in $\bar{\mathbf{f}}$ correspond to the imposed natural spline boundary conditions. Before solving this linear system, we next present results on the structure of \(\mathbf{B}\) and establish results on its invertibility as well as the computational complexity for its inversion.

%%%%%%%%%%%%%%%%%%%%%%%%%%%%%%%%%%%%%%%%%%%%%%%%%%%%%%%%%%%%%%%%%%%%%%%%%%%%%%%%%%%%
\begin{proposition}
\label{prop:invertibility}
For cubic B-spline i.e., $p\!=\!3$, the B-spline matrix  $\mathbf{B}\!\in\! \mathbb{R}^{(N+2)\times (N+2)}$ in \eqref{eq:square_bspline_matrix} has the following properties:
(i)$\mathbf{B}$ is is banded, with at most $4$ non zero entries in each row. (ii) $\mathbf{B}$ is full rank, hence the associated linear system $\mathbf{B}\mathbf{c} = \bar{\mathbf{f}}$ admits a unique solution.
\end{proposition}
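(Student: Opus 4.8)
The plan is to treat the two claims separately: the banded structure follows directly from the compact support of the cubic basis already recalled after Definition~\ref{def:cox de boor}, while the invertibility is most cleanly obtained through a variational (energy) argument applied to the homogeneous spline.

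For claim (i), I would invoke the compact-support property, namely that each cubic basis function $B_{j,3}$ vanishes outside $[t_j,t_{j+4})$ and is therefore nonzero on at most $p+1=4$ consecutive knot intervals. Since differentiation does not enlarge the support, the same bound holds for each $B''_{j,3}$. Consequently, at any fixed abscissa at most four basis functions (and at most four of their second derivatives) are nonzero, so every one of the $N+2$ rows of $\mathbf{B}$ carries at most four nonzero entries. To upgrade this from mere sparsity to a banded pattern, I would use the ordering $\alpha_0<\cdots<\alpha_{N-1}$ together with the explicit clamped knot vector in \eqref{eq:cubic_knots}: an interior node satisfies $\alpha_i=t_{i+3}\in[t_{i+3},t_{i+4})$, so the only indices with $B_{j,3}(\alpha_i)\neq 0$ are $j\in\{i,i+1,i+2,i+3\}$. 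As $i$ advances by one, this window of active columns shifts by one, which is exactly the banded structure; the derivative rows $\mathbf{a}$ and $\mathbf{b}$ occupy the top-left and bottom-right corners and remain within the same bandwidth.

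For claim (ii), since $\mathbf{B}$ is square it suffices to show $\mathbf{B}\mathbf{c}=\mathbf{0}$ forces $\mathbf{c}=\mathbf{0}$. Given such a $\mathbf{c}$, I would form the spline $s(z)=\sum_{j=0}^{N+1}C_{j,3}B_{j,3}(z)\in\mathcal{S}_{3,\mathbf{t}}$; the middle block of $\mathbf{B}$ then gives $s(\alpha_i)=0$ for $i=0,\ldots,N-1$, and the rows $\mathbf{a},\mathbf{b}$ give the natural conditions $s''(\alpha_0)=s''(\alpha_{N-1})=0$. The key step is to evaluate the energy $\int_{\alpha_0}^{\alpha_{N-1}}(s''(z))^2\,dz$ and show it vanishes. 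Integrating by parts once yields $[s''s']_{\alpha_0}^{\alpha_{N-1}}-\int_{\alpha_0}^{\alpha_{N-1}}s'''s'\,dz$, where the boundary term is zero by the natural conditions. Because unit interior knot multiplicity makes $s$ a $C^2$ spline with piecewise-constant third derivative, $s'''$ is constant on each $[\alpha_i,\alpha_{i+1}]$, so $\int_{\alpha_i}^{\alpha_{i+1}}s'''s'\,dz=s'''|_{[\alpha_i,\alpha_{i+1}]}\,(s(\alpha_{i+1})-s(\alpha_i))=0$ since $s$ vanishes at every node. Hence the energy is zero, forcing $s''\equiv 0$, so $s$ is affine on $[\alpha_0,\alpha_{N-1}]$; vanishing at the two distinct nodes $\alpha_0,\alpha_1$ then gives $s\equiv 0$, and linear independence of the B-spline basis yields $C_{j,3}=0$ for all $j$, i.e.\ $\mathbf{c}=\mathbf{0}$.

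The main obstacle is the invertibility in (ii): the difficulty is not algebraic but lies in justifying the integration-by-parts chain on a piecewise-cubic $C^2$ function and verifying that the boundary term is annihilated precisely by the imposed natural conditions (not by the interpolation conditions). I would therefore emphasize the $C^2$ regularity arising from unit knot multiplicity, so that $s''$ is continuous across interior knots and the integral splits cleanly over the subintervals, and I would invoke the standard linear independence of B-splines on a clamped knot vector to transfer the conclusion $s\equiv 0$ back to the coefficient vector.
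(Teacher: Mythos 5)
Your proof is correct, but it establishes invertibility by a genuinely different route than the paper. For part (i) both arguments coincide: the at-most-four nonzeros per row follow from the compact support $B_{j,3}(z)=0$ for $z\notin[t_j,t_{j+4})$ together with the ordering of the nodes in the clamped knot vector \eqref{eq:cubic_knots}; the paper simply displays the resulting banded matrix explicitly. For part (ii), however, the paper works at the matrix level: it cites the full row rank of $\mathbf{B}_{\mathrm{rect}}$ (the Schoenberg--Whitney-type result from de Boor, Chapter XIII) and then argues that the two appended second-derivative rows $\mathbf{a},\mathbf{b}$ are linearly independent of the interpolation rows by a sign/support argument (each boundary row has a strictly negative entry, whereas the interior rows are entrywise non-negative). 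Your argument instead proves that the homogeneous system $\mathbf{B}\mathbf{c}=\mathbf{0}$ has only the trivial solution via the classical natural-spline energy identity: the spline $s$ built from a null vector vanishes at all nodes and satisfies $s''(\alpha_0)=s''(\alpha_{N-1})=0$, so integrating $\int (s'')^2$ by parts (boundary term killed by the natural conditions, interior integral killed because $s'''$ is piecewise constant and $s$ vanishes at consecutive nodes) forces $s''\equiv 0$, hence $s$ affine, hence $s\equiv 0$, and linear independence of the B-spline basis on a clamped knot vector gives $\mathbf{c}=\mathbf{0}$. Your route is more self-contained and arguably more robust: the paper's sign-based independence step is delicate (a row with a negative entry can in general still lie in the span of non-negative rows, so that step leans implicitly on additional structure), whereas your variational argument needs only the standard facts that simple interior knots give $C^2$ regularity with piecewise-constant third derivative and that B-splines are linearly independent; it also isolates exactly what the natural boundary conditions contribute. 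What the paper's approach buys in exchange is brevity and a statement (full row rank of $\mathbf{B}_{\mathrm{rect}}$ in \eqref{eq:B_rect}) that it reuses later in the proof of Theorem \ref{thm:lebegue_bound}.
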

\begin{proof}
See Appendix \ref{proof:1} for the proof.
\end{proof}
%%%%%%%%%%%%%%%%%%%%%%%%%%%%%%%%%%%%%%%%%%%%%%%%%%%%%%%%%%%%%%%%%%%%%%%%%%%%%%%%%%%%%%%%%%%%%%%%%
\begin{proposition}
\label{prop:complexity}
For cubic B-splines, the matrix $\mathbf{B}\in \mathbb{R}^{(N+2)\times (N+2)}$ in \eqref{eq:square_bspline_matrix} has upper and lower bandwidth of $2$. Subsequently, the associated linear system can be solved using banded LU decomposition with computational complexity $\mathcal{O}(8(N+2))$\cite{b90}, which is smaller than that of a dense matrix.
\end{proposition}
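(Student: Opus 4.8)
The plan is to read the bandwidth off the row-by-row sparsity pattern of $\mathbf{B}$ in \eqref{eq:square_bspline_matrix}, and then substitute the resulting bandwidths into the standard flop count for a banded $LU$ solve. Proposition~\ref{prop:invertibility} already guarantees at most four nonzero entries per row; the sharper statement needed here is the \emph{localized} one, namely that every nonzero lies within two diagonals on either side of the main diagonal.

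First I would exploit the compact support property, that $B_{j,3}(z)$ and hence $B''_{j,3}(z)$ vanish outside $[t_j,t_{j+4})$, together with the clamped knot vector \eqref{eq:cubic_knots}. Writing the interior evaluation points as $\alpha_i=t_{i+3}$ for $i=0,\ldots,N-1$, I would enumerate, for each evaluation point, the basis indices $j$ whose support contains that point. For an interior interpolation node $\alpha_i$ ($1\le i\le N-2$), exactly the indices $j\in\{i,i+1,i+2\}$ are active once one accounts for the vanishing of $B_{i+3,3}$ at the left end of its support; the clamped endpoints $\alpha_0,\alpha_{N-1}$ contribute only the single active indices $j=0$ and $j=N+1$, since $B_{0,3}(\alpha_0)=1$ and $B_{N+1,3}(\alpha_{N-1})=1$ while the remaining basis functions vanish there. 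For the two natural-spline rows $\mathbf{a}$ and $\mathbf{b}$, the reduced smoothness at the multiplicity-four endpoints leaves the active second-derivative indices as $\{0,1,2\}$ and $\{N-1,N,N+1\}$; here the key subcomputation is $B''_{3,3}(\alpha_0)=0$ (and, by mirror symmetry, $B''_{N-2,3}(\alpha_{N-1})=0$), because on the leftmost interval of its support $B_{3,3}\propto(z-\alpha_0)^3$ has vanishing value and first two derivatives at $\alpha_0$.

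With the rows ordered as in \eqref{eq:square_bspline_matrix} (with $\mathbf{a}$ on top and $\mathbf{b}$ at the bottom), I would then tabulate, for each row index $r$ and active column index $c$, the offsets $c-r$ and $r-c$. The interior and clamped interpolation rows yield offsets of at most $1$ in both directions, whereas the extremal offsets of $2$ arise solely from the boundary rows: $\mathbf{a}$ (columns $\{0,1,2\}$ against the diagonal at column $0$, giving upper offset $2$) and $\mathbf{b}$ (columns $\{N-1,N,N+1\}$ against the diagonal at column $N+1$, giving lower offset $2$). Taking maxima over all rows gives $k_u=k_l=2$, proving the first assertion; I would stress that it is precisely the natural-spline boundary conditions, not the interpolation rows, that fix the bandwidth at $2$.

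Finally, I would invoke the standard complexity of a banded $LU$ factorization \cite{b90}: for an $n\times n$ matrix with lower and upper bandwidths $k_l$ and $k_u$, the factorization costs about $2\,n\,k_l\,k_u$ flops and the two triangular solves cost $O\!\bigl(n(k_l+k_u)\bigr)$. Substituting $n=N+2$ and $k_l=k_u=2$ yields $2(N+2)\cdot 2\cdot 2=8(N+2)$, i.e. complexity $\mathcal{O}\!\bigl(8(N+2)\bigr)$, which is linear in $N$ and hence far below the $\mathcal{O}\!\bigl((N+2)^3\bigr)$ cost of a dense solve. I expect the main obstacle to be the boundary bookkeeping in the second step---verifying that the second-derivative rows activate exactly three coefficients and that the offending index $B''_{3,3}(\alpha_0)$ (and its mirror) drops out---since this is exactly what distinguishes bandwidth $2$ from the naive bound of $3$ suggested by a four-term support.
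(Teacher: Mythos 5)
Your proposal is correct, and it is in fact more complete than what the paper provides: the paper states this proposition without a dedicated proof, leaving the bandwidth claim implicit in the matrix structure displayed inside the proof of Proposition~\ref{prop:invertibility} and delegating the flop count entirely to the citation \cite{b90}. Your argument supplies exactly the details that display takes for granted. In particular, you isolate the one genuinely nontrivial point: a naive support count ($B_{j,3}$ nonzero on $[t_j,t_{j+4})$) would suggest up to four active columns per row and hence bandwidth $3$; what brings it down to $2$ is (i) evaluation at the knots themselves, so that for an interior node $\alpha_i$ the function $B_{i+3,3}$ vanishes at the left end of its support, leaving active columns $\{i,i+1,i+2\}$ with offsets $\{-1,0,1\}$, together with the clamped-endpoint identities $B_{0,3}(\alpha_0)=1$ and $B_{N+1,3}(\alpha_{N-1})=1$; and (ii) the vanishing $B''_{3,3}(\alpha_0)=0$ (and its mirror $B''_{N-2,3}(\alpha_{N-1})=0$), which follows since $B_{3,3}\propto(z-\alpha_0)^3$ on its leftmost interval, so the two natural-boundary rows activate only columns $\{0,1,2\}$ and $\{N-1,N,N+1\}$, producing the extremal offsets $\pm 2$. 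This is fully consistent with the matrix the paper exhibits in Appendix~\ref{proof:1}: first row $(a_1,-b_1,d_1,0,\ldots,0)$, last row $(0,\ldots,0,a_{N+2},-b_{N+2},d_{N+2})$, and tridiagonal-type interior rows. Your final step, substituting $n=N+2$ and $k_l=k_u=2$ into the banded-LU factorization cost of roughly $2\,n\,k_l\,k_u$ flops from \cite{b90} to obtain $8(N+2)$, with $\mathcal{O}(n(k_l+k_u))$ for the triangular solves, is precisely the computation the paper's citation is meant to cover.
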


Overall, after obtaining the spline coefficients $\{C_{j,3}\}_{j=0}^{N+1}$, the desired function values are reconstructed as explained in Section \ref{subsec:our framework cubic}. In the rest of the section, we discuss the modifications required for interpolation in the presence of stragglers, i.e., when $S>0$. Let $\mathcal{F}\!\!=\!\!\{l_1,l_2,\ldots,l_M\}$ denote the index set of the $M\!\!=\!\!N-S$ non-stragglers. Let $\{\mathbf{R}_{l_1},\mathbf{R}_{l_2},\ldots,\mathbf{R}_{l_M}\}$ denote the corresponding results received at the master, where $\mathbf{R}_{l_i}=f\!\left(u(\alpha_{l_i})\right)$ for $l_i\in\mathcal{F}$. Let $\{\alpha_{l_{i}}\}_{i=0}^{M-1}\subset\{\alpha_i\}_{i=0}^{N-1}$ denote the ordered set of evaluation points corresponding to the non-stragglers in the set $\mathcal{F}$. To generate the B-spline basis functions, the master constructs the associated $4$-clamped knot vector from evaluation points of the non-stragglers set. Subsequently, $\mathbf{B}_{rect}$ defined in \eqref{eq:B_rect} reduces to the dimension $M\times(M+2)$, where the basis functions $\{B_{i,3}(z)\}_{i=0}^{M+1}$ are evaluated at the points $\{\alpha_{l_{i}}\}_{i=0}^{M-1}$. Further, after imposing the boundary conditions on $\mathbf{B}_{rect}$, we obtain $\mathbf{B}$ defined in \eqref{eq:square_bspline_matrix} of dimension $(M+2)\times(M+2)$. Finally, solving the resulting linear system yields the spline coefficients, which are subsequently used to recover the desired values $\{f(\mathbf{X}_j)\}_{j=0}^{K-1}$.
Similar to BACC, there is no strict recovery threshold for BSCC as well. The master reconstructs the function using the evaluations from the non-stragglers, with accuracy improving as more evaluations are received.

%%%%%%%%%%%%%%%%%%%%%%%%%%%%%%%%%%%%%%%%%%%%%%%%%%%%%%%%%%%%%%%%%%%%%%%%%%%%%%
\section{Error Bounds for BSCC}
\label{sec:analytical results}
As discussed in Section \ref{subsec:our framework cubic}, the reconstruction accuracy of $\{f(\mathbf{X}_{j})\}_{j=0}^{K-1}$ depends on the accuracy with which $r_{B-spline}(z)$ approximates $f(u(z))$. Therefore, we are interested in deriving bounds on the approximation error $||r_{B-spline}(z)-f(u(z))||_{\infty}$, where $||\cdot||_{\infty}$ denotes the infinity norm over the support set. Henceforth, we use $g(z)$ to refer to $f(u(z))$ and $T g(z)$ to refer to $r_{B-spline}(z)$, assuming that $T$ is the underlying B-spline interpolation operator that approximates $g(z)$ to $Tg(z)\in\mathcal{S}_{3,\mathbf{t}}$. Towards evaluating the interpolator operator $T$, we recollect this definition from \cite{z9}.

\begin{definition}
\label{def:operator_norm}
Let $C[a,b]$ represent the set of all continuous functions defined over $[a,b]$. Further, for any function $g(z)\in C[a,b]$, let $Tg(z)\in \mathcal{S}_{3,\mathbf{t}}$ denote the spline function obtained by applying the operator $T$ to $g(z)$. Then the operator norm of $T$ is defined as

\begin{equation}
\label{eq:opt_norm_Def}
\|T\|:=\sup_{g(z)\neq 0}\frac{\|Tg(z)\|_{\infty}}{\|g(z)\|_{\infty}}.
\end{equation}
\end{definition}
%%%%%%%%%%%%%%%%%%%%%%%%%%%%%%%%%%%%%%%%%%%%%%%%%%%%%%%%%%%%%%%%%%%%%%%%%%%%
For the proposed BSCC interpolation  with cubic B-splines, we derive an upper bound on the operator norm $||T||$, which is presented in the following proposition.
\begin{proposition}
\label{lem:lebesgue}
For a given $N$, consider the BSCC setting when $S=0$. Let $T$ denote our spline interpolation operator that maps the composite function $g(z)$ to a spline function $Tg(z)\in \mathcal{S}_{3,\mathbf{t}}$. Then, $\|T\|$ is upper bounded as
\begin{equation}
\|T\|\leq \|[\mathbf{B}^{-1}]_{\mathrm{sub}}\|_{\infty},
\end{equation}
where $[\mathbf{B}^{-1}]_{\mathrm{sub}}\!\in\!\mathbb{R}^{(N+2)\times{N}}$ denotes the submatrix of $\mathbf{B}^{-1}$ after excluding the first and the last columns, where $\mathbf{B}$ is given in \eqref{eq:square_bspline_matrix}.
\end{proposition}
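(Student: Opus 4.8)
The plan is to write the interpolation operator $T$ explicitly as the composition of a sampling map, the linear solve $\mathbf{B}\mathbf{c}=\bar{\mathbf{f}}$, and evaluation of the spline expansion, and then to chain together three elementary norm estimates. First I would fix an arbitrary $g(z)\in C[a,b]$ and note that $T$ produces the coefficient vector $\mathbf{c}=\mathbf{B}^{-1}\bar{\mathbf{f}}$, where $\bar{\mathbf{f}}=[0\ \mathbf{f}^{T}\ 0]^{T}$ and $\mathbf{f}=[g(\alpha_0),\ldots,g(\alpha_{N-1})]^{T}$. Because the first and last entries of $\bar{\mathbf{f}}$ are precisely the zeros coming from the natural boundary conditions, the first and last columns of $\mathbf{B}^{-1}$ are annihilated, so $\mathbf{c}=[\mathbf{B}^{-1}]_{\mathrm{sub}}\,\mathbf{f}$ with $[\mathbf{B}^{-1}]_{\mathrm{sub}}\in\mathbb{R}^{(N+2)\times N}$ the middle-$N$-column submatrix. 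This is the step that makes the $\mathrm{sub}$ submatrix appear and explains why the boundary rows do not enter the bound. Here I would invoke Proposition \ref{prop:invertibility} to guarantee that $\mathbf{B}^{-1}$ is well-defined.

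The key analytic step is to pass from the coefficient vector back to the sup-norm of the spline. Using the non-negativity $B_{j,3}(z)\ge 0$ together with the partition-of-unity identity $\sum_{j} B_{j,3}(z)=1$ on $[a,b]$ (both recalled in Section \ref{subsec:prelim}), I would estimate, for every $z$,
\[
|Tg(z)|=\Bigl|\sum_{j=0}^{N+1}C_{j,3}\,B_{j,3}(z)\Bigr|
\le \max_{j}|C_{j,3}|\sum_{j=0}^{N+1}B_{j,3}(z)
=\|\mathbf{c}\|_{\infty},
\]
so that $\|Tg\|_{\infty}\le\|\mathbf{c}\|_{\infty}$. I regard this convex-combination bound as the crux of the argument: it is what converts a statement about interpolation coefficients into one about function values, and it rests essentially on the defining properties of the B-spline basis rather than on any special structure of $\mathbf{B}$.

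The remaining two estimates are routine. Applying the induced infinity norm to $\mathbf{c}=[\mathbf{B}^{-1}]_{\mathrm{sub}}\,\mathbf{f}$ gives $\|\mathbf{c}\|_{\infty}\le\|[\mathbf{B}^{-1}]_{\mathrm{sub}}\|_{\infty}\,\|\mathbf{f}\|_{\infty}$, and since each $\alpha_i\in[a,b]$ we have $\|\mathbf{f}\|_{\infty}=\max_{i}|g(\alpha_i)|\le\|g\|_{\infty}$. Chaining the three inequalities yields $\|Tg\|_{\infty}\le\|[\mathbf{B}^{-1}]_{\mathrm{sub}}\|_{\infty}\,\|g\|_{\infty}$ for every $g\neq 0$, and taking the supremum in the definition \eqref{eq:opt_norm_Def} produces the claimed bound $\|T\|\le\|[\mathbf{B}^{-1}]_{\mathrm{sub}}\|_{\infty}$. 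The only point requiring care is the bookkeeping in the first paragraph, namely verifying that excluding exactly the first and last columns of $\mathbf{B}^{-1}$ correctly accounts for the two boundary zeros; the analytic content is concentrated entirely in the partition-of-unity estimate.
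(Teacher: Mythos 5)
Your proof is correct and follows essentially the same route as the paper: both rest on the invertibility of $\mathbf{B}$, the observation that the zero boundary entries of $\bar{\mathbf{f}}$ kill the first and last columns of $\mathbf{B}^{-1}$, and the combination of B-spline non-negativity with the partition of unity to reach the max-row-sum quantity $\|[\mathbf{B}^{-1}]_{\mathrm{sub}}\|_{\infty}$. The only difference is bookkeeping: the paper groups the products as cardinal-type functions $\Phi_k(z)=\sum_j(\mathbf{B}^{-1})_{jk}B_{j,3}(z)$ and bounds $\sum_k|\Phi_k(z)|$, whereas you first bound $\|Tg\|_{\infty}\le\|\mathbf{c}\|_{\infty}$ and then apply the induced matrix norm — the same chain of inequalities in a different associativity.
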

\begin{proof}
      See Appendix \ref{proof:2} for the proof.
\end{proof}
%%%%%%%%%%%%%%%%%%%%%%%%%%%%%%%%%%%%%%%%%%%%%%%%%%%%%%%%%%%%%%%%%%%%%%%%%%%%%%
For a given set of distinct ordered evaluation points $\{\alpha_i\}_{i=0}^{N-1}$, the above upper bound $\|[\mathbf{B}^{-1}]_{\mathrm{sub}}\|_{\infty}$ can be computed using \eqref{eq:square_bspline_matrix} in a straightforward manner. However, in the following theorem, we derive an upper bound on $\|[\mathbf{B}^{-1}]_{\mathrm{sub}}\|_{\infty}$ only as a function of the knot spacings.

%$h_{max}=\max_{0\leq i \leq N-2} (t_{i+1}-t_i)$ 
%h_{min}=\min_{0\leq i \leq N-2} (t_{i+1}-t_i)$$

\begin{theorem}
\label{thm:lebegue_bound}
For a given $N$, a knot vector $\mathbf{t}$, when $S\!\!\!=\!\!\!0$, let $h_{\max}\!\!\!=\!\!\!\max_{\substack{0 \leq i \leq N-2\\}}\left( t_{i+1} - t_i \right)$ and $h_{\min}\!\!=\!\!\min_{\substack{0 \leq i \le N-2}}\left( t_{i+1} - t_i \right)$ represent the maximum and minimum knot spacing between consecutive distinct knots in the knot vector. Then
\[\|[\mathbf{B}^{-1}]_{\mathrm{sub}}\|_{\infty}\leq C_{1} (N+2)\frac{h_{\max}}{h_{\min}},\]
where $C_{1}>0$ is a constant, which depends only on the degree of the spline.
\end{theorem}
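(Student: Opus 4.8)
The plan is to route the estimate through the smallest singular value of $\mathbf{B}$, which cleanly isolates the knot-spacing dependence. First I would observe that $[\mathbf{B}^{-1}]_{\mathrm{sub}}$ is an $(N+2)\times N$ matrix, so its $\infty$-norm, being a maximum absolute row sum over $N$ columns, is at most $N$ times the largest entry of $[\mathbf{B}^{-1}]_{\mathrm{sub}}$ in magnitude. Since the entries of the submatrix form a subset of the entries of $\mathbf{B}^{-1}$, that largest entry is bounded by $\max_{i,j}|[\mathbf{B}^{-1}]_{ij}| \le \|\mathbf{B}^{-1}\|_2 = \sigma_{\min}(\mathbf{B})^{-1}$. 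Combining these gives $\|[\mathbf{B}^{-1}]_{\mathrm{sub}}\|_{\infty} \le N\,\sigma_{\min}(\mathbf{B})^{-1} \le (N+2)\,\sigma_{\min}(\mathbf{B})^{-1}$, so the theorem reduces to the single inequality $\sigma_{\min}(\mathbf{B}) \ge C_1^{-1}\, h_{\min}/h_{\max}$.

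Next I would exploit the structure of $\mathbf{B}$ from Propositions \ref{prop:invertibility} and \ref{prop:complexity}, namely that it is banded with bandwidth $2$ and has two distinct kinds of rows living on very different scales. The $N$ interpolation rows carry the values $B_{j,3}(\alpha_i)$, which by nonnegativity and the partition-of-unity property lie in $[0,1]$ and sum to one along each row; these are therefore $\Theta(1)$ and mesh-independent. The two boundary rows $\mathbf{a}$ and $\mathbf{b}$ carry the second derivatives $B''_{j,3}$ at the endpoints, and differentiating the Cox--de Boor recursion of Definition \ref{def:cox de boor} twice shows that these entries scale like the inverse square of the local knot spacing, so their magnitudes lie between $\Theta(h_{\max}^{-2})$ and $\Theta(h_{\min}^{-2})$. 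The key qualitative feature is thus that $\mathbf{B}$ is a banded matrix whose interpolation and boundary rows are separated by a scale factor governed precisely by the knot spacings.

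To lower-bound $\sigma_{\min}(\mathbf{B}) = \min_{\|\mathbf{c}\|_2=1}\|\mathbf{B}\mathbf{c}\|_2$, I would argue that no unit coefficient vector can be annihilated faster than at rate $h_{\min}/h_{\max}$. The clamped construction pins the two endpoint coefficients via the first and last interpolation rows, since $B_{0,3}(\alpha_0)=B_{N+1,3}(\alpha_{N-1})=1$, while each interior row expresses a sampled value as a convex combination of at most four neighbouring coefficients. On a uniform mesh, the normalized clamped collocation system augmented with the natural boundary conditions is uniformly well-conditioned with a constant depending only on $p=3$, via the local-support nonsingularity argument underlying Proposition \ref{prop:invertibility}. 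On a nonuniform mesh, comparing rows supported on intervals of size $h_{\max}$ against those supported on intervals of size $h_{\min}$ weakens this control and injects the factor $h_{\min}/h_{\max}$ into the lower bound, with the degree-only constant $C_1$ emerging from the normalized boundary values.

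The hard part will be this last quantitative step: converting the qualitative nonsingularity of the clamped cubic system into the explicit estimate $\sigma_{\min}(\mathbf{B}) \ge C_1^{-1}\,h_{\min}/h_{\max}$ with the mesh ratio appearing to the first power only, rather than as a higher power of $h_{\max}/h_{\min}$. This hinges on the explicit values and sign pattern of the $B_{j,3}$ and, crucially, of the second derivatives $B''_{j,3}$ at the knots, since these carry the $h^{-2}$ scaling that controls how the boundary rows couple to the interior. I would therefore spend the bulk of the effort on a careful accounting of the normalized entries near the two endpoints, where the clamped structure provides the needed simplification, and treat the interior-row contribution by the banded, convex-combination structure already in hand.
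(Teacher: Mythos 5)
Your opening reduction is sound: for the $(N+2)\times N$ submatrix, $\|[\mathbf{B}^{-1}]_{\mathrm{sub}}\|_{\infty}\le N\max_{i,j}|(\mathbf{B}^{-1})_{ij}|\le N\,\|\mathbf{B}^{-1}\|_{2}=N\,\sigma_{\min}(\mathbf{B})^{-1}$, so the theorem would indeed follow from $\sigma_{\min}(\mathbf{B})\ge C_{1}^{-1}h_{\min}/h_{\max}$. But that inequality is the entire mathematical content of the theorem, and your proposal never proves it — you explicitly defer it as ``the hard part'' and offer only a qualitative sketch (sign patterns of $B''_{j,3}$, scale separation between rows, endpoint accounting). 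None of that sketch contains a mechanism that actually produces the mesh ratio $h_{\min}/h_{\max}$ to the first power with a degree-only constant; diagonal-dominance or convex-combination arguments on the interpolation rows give mesh-independent statements about those rows in isolation, and they say nothing about how the two-dimensional nullspace of $\mathbf{B}_{\mathrm{rect}}$ (which is exactly where $\sigma_{\min}$ is decided) is controlled by the boundary rows. There is also an unaddressed scaling issue in your formulation: the boundary rows carry entries of order $h^{-2}$, so $\mathbf{B}$ mixes rows of wildly different magnitudes; your remark about ``normalized'' boundary values gestures at row rescaling, but you never note the fact that makes this legitimate, namely that rescaling the first and last rows of $\mathbf{B}$ only rescales the first and last columns of $\mathbf{B}^{-1}$, which are precisely the columns excluded from $[\mathbf{B}^{-1}]_{\mathrm{sub}}$.

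The paper closes exactly this gap by a different decomposition that avoids the boundary rows altogether. It splits $\mathbb{R}^{N+2}=\mathbf{V}\oplus\mathbf{W}$, where $\mathbf{V}=\mathcal{N}(\mathbf{B}_{\mathrm{rect}})$ and $\mathbf{W}$ is the subspace of coefficient vectors satisfying the natural boundary conditions (so the rows $\mathbf{a},\mathbf{b}$ vanish identically on $\mathbf{W}$). On $\mathbf{W}$, the map $[\mathbf{B}^{-1}]_{\mathrm{sub}}$ is inverted by $\mathbf{B}_{\mathrm{rect}}$ itself, so the quantity to bound becomes $\inf_{\mathbf{c}\in\mathbf{W},\,\mathbf{c}\neq 0}\|\mathbf{B}_{\mathrm{rect}}\mathbf{c}\|_{F}/\|\mathbf{c}\|_{F}$ — a singular value of the \emph{rectangular} collocation matrix restricted to $\mathbf{W}$, with the $h^{-2}$ rows gone. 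The mesh ratio then emerges not from entry-wise analysis but from two function-space energy estimates: the $L^{2}$ stability of the B-spline basis, $\int|s(z)|^{2}dz\ge C_{3}h_{\min}\|\mathbf{c}\|_{F}^{2}$, and a discrete-to-continuous comparison, $\int|s(z)|^{2}dz\le C_{2}h_{\max}\sum_{j}|s(t_j)|^{2}$, whose quotient gives $\sigma_{\min}\ge (C_{3}/C_{2})\,h_{\min}/h_{\max}$. If you wanted to complete your route, you would still need this same estimate for the restricted operator (to handle unit vectors dominated by their $\mathbf{W}$-component), \emph{plus} an additional nondegeneracy bound for the $2\times 2$ system formed by $\mathbf{a},\mathbf{b}$ acting on $\mathbf{V}$ — so your formulation is strictly more work than the paper's, and the part you deferred is the part that matters.
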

\begin{proof}
      See Appendix \ref{proof:3} for the proof.
\end{proof}
%\\ t_{i+1} > t_i

Further, using the above upper bound on $\|[\mathbf{B}^{-1}]_{\mathrm{sub}}\|_{\infty}$, we use the results from \cite{z9} to present an upper bound on the approximation error of BSCC.
%%%%%%%%%%%%%%%%%%%%%%%%%%%%%%%%%%%%%%%%%%%%%%%%%%%%%%%%%%%%%%%%%%%%%%%%%%%%%%
\begin{theorem}
\label{th:first prelim}
For a given $N$, consider the BSCC scheme when $S=0$. Then, for every function $g(z) \in C[a,b]$, the interpolation error of BSCC satisfies
\begin{align}
\label{eq:app_error_1}
\|g(z) - Tg(z)\|_{\infty}
&\leq \Bigl(1 + C_{1}(N+2) \frac{h_{\max}}{h_{\min}}\Bigr)D_{S_{3, t}}(g(z)),
\end{align}
where $D_{S_{3, t}}(g(z)) = \inf_{s(z) \in S_{3,\mathbf{t}}} \|g(z) - s(z)\|_{\infty}$, and all other constants are as defined in Theorem \ref{thm:lebegue_bound}.
\end{theorem}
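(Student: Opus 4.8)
The plan is to establish a Lebesgue-type inequality that relates the interpolation error to the best spline approximation error, and then to substitute the operator-norm bounds already derived. The cornerstone is the observation that the B-spline interpolation operator $T$ acts as the identity on the splines it reconstructs from: since the natural cubic spline interpolant of any $s \in \mathcal{S}_{3,\mathbf{t}}$ that already meets the imposed interpolation conditions and natural boundary conditions is $s$ itself, we have $Ts = s$. I would first record this idempotency/projection property explicitly, since it is precisely what makes the subsequent telescoping argument collapse cleanly.

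Next, I would fix an arbitrary admissible spline $s \in \mathcal{S}_{3,\mathbf{t}}$ with $Ts = s$ and write $g - Tg = (g - s) - T(g - s)$, using linearity of $T$ together with $Ts = s$. Taking the infinity norm, applying the triangle inequality, and invoking the definition of the operator norm from Definition \ref{def:operator_norm} yields $\|g - Tg\|_{\infty} \leq (1 + \|T\|)\|g - s\|_{\infty}$. Since this holds for every such $s$, taking the infimum over the spline space produces $\|g - Tg\|_{\infty} \leq (1 + \|T\|)\, D_{S_{3,\mathbf{t}}}(g)$, which is the abstract Lebesgue inequality underlying the claimed bound.

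Finally, I would chain the two estimates already established: Proposition \ref{lem:lebesgue} gives $\|T\| \leq \|[\mathbf{B}^{-1}]_{\mathrm{sub}}\|_{\infty}$, and Theorem \ref{thm:lebegue_bound} gives $\|[\mathbf{B}^{-1}]_{\mathrm{sub}}\|_{\infty} \leq C_{1}(N+2)\frac{h_{\max}}{h_{\min}}$. Substituting the resulting bound for $\|T\|$ into the factor $(1 + \|T\|)$ delivers the stated inequality \eqref{eq:app_error_1}.

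The main obstacle I anticipate is the projection step. One must ensure that the infimum is taken over a spline family on which $T$ genuinely acts as the identity, so that the term $T(g-s)$ telescopes away. Because the operator here enforces the natural boundary conditions $s''(\alpha_0)=s''(\alpha_{N-1})=0$, the identity $Ts = s$ is immediate only for splines already satisfying those conditions; I would therefore need to verify that restricting to this subspace (or, equivalently, invoking the precise projector statement from \cite{z9}) remains legitimate and still yields $D_{S_{3,\mathbf{t}}}(g)$ on the right-hand side. Once that point is secured, the remaining steps are routine applications of linearity, the triangle inequality, and the previously established norm estimates.
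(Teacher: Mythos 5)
Your route is exactly the paper's: the paper's entire proof of Theorem \ref{th:first prelim} consists of citing de Boor's Lebesgue-type inequality and declaring the bound an immediate consequence of Proposition \ref{lem:lebesgue} and Theorem \ref{thm:lebegue_bound}; the telescoping identity $g - Tg = (g-s) - T(g-s)$, the triangle inequality, the infimum, and the chaining of the two norm estimates that you write out are precisely the content being invoked by that citation. In that sense your reconstruction is on target and considerably more explicit than the paper itself.

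However, the obstacle you flag in your last paragraph is not a routine verification to be deferred --- it is a genuine gap, and it is unresolved in the paper as well. Because $T$ enforces the natural boundary conditions, the identity $Ts = s$ holds only on the subspace $S^{\mathrm{nat}} = \{\, s \in \mathcal{S}_{3,\mathbf{t}} : s''(\alpha_0) = s''(\alpha_{N-1}) = 0 \,\}$, which has dimension $N$, whereas $\dim \mathcal{S}_{3,\mathbf{t}} = N+2$; thus $T$ is a projector onto $S^{\mathrm{nat}}$, not onto $\mathcal{S}_{3,\mathbf{t}}$. Your telescoping step therefore yields only $\|g - Tg\|_{\infty} \leq \bigl(1 + \|T\|\bigr) \inf_{s \in S^{\mathrm{nat}}} \|g - s\|_{\infty}$, and since $S^{\mathrm{nat}} \subsetneq \mathcal{S}_{3,\mathbf{t}}$, this infimum \emph{dominates} $D_{S_{3,t}}(g(z))$ --- the inequality runs in the wrong direction to recover \eqref{eq:app_error_1} as stated. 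The gap has quantitative teeth: if $g \in C^{4}[a,b]$ with $g''(\alpha_0) \neq 0$, the classical boundary-order loss of natural splines gives $\inf_{s \in S^{\mathrm{nat}}} \|g - s\|_{\infty} = \Theta(h_{\max}^{2})$, while $D_{S_{3,t}}(g(z)) = O(h_{\max}^{4})$; with uniform nodes (so $h_{\max}/h_{\min}=1$) the right-hand side of \eqref{eq:app_error_1} then scales as $N \cdot N^{-4} = N^{-3}$ while the true interpolation error scales as $N^{-2}$, so the stated inequality cannot hold for all such $g$ once $N$ is large. The honest fix is either to restate the theorem with the infimum taken over $S^{\mathrm{nat}}$ (which forces a corresponding weakening of Corollary \ref{th:second prelim} to order $h_{\max}^{2}$ unless $g''$ vanishes at both endpoints), or to add a separate estimate controlling the boundary contribution. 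Your instinct to demand the precise projector statement before accepting that step was exactly the right one; the step cannot in fact be secured as the theorem is currently phrased.
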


\begin{proof}
    See Appendix \ref{proof:4} for the proof.
\end{proof}
While the above theorem gives a bound on the approximation error of BSCC for an arbitrary function $g(z)$, there is no simpler representation of $D_{S_{3, t}}(z)$ in general. However, for a special class of functions, i.e., when $g(z)\in C^{4}[a,b]$, where $C^{4}[a,b]$ denotes the set of all functions over $[a,b]$ whose fourth derivative is also continuous, we use the results in \cite[Chapter XII, page 149]{z9} to bound $D_{S_{3, t}}(g(z))$, and subsequently derive an upper bound on \eqref{eq:app_error_1}.
%%%%%%%%%%%%%%%%%%%%%%%%%%%%%%%%%%%%%%%%%%%%%%%%%%%%%%%%%%%%%%%%%%%%%%%%%%%
\begin{corollary}
\label{th:second prelim}
For the BSCC setting considered in Theorem \ref{th:first prelim}, if $g(z)\in C^{4}[a,b]$, then there exists a constant $C>0$, depending only on the degree of spline, such that 

\begin{small}
\begin{equation}
\label{eq:app_error_2}
\|g(z) - Tg(z)\|_{\infty} \leq C\bigg(1 +  C_{1} (N+2)\frac{h_{\max}}{h_{\min}}\bigg)\,h_{max}^{4} \|g''''(z)\|_\infty,
\end{equation}
\end{small}

\noindent where $g''''(z)$ represents the fourth derivative of $g(z)$ in $[a, b]$, and all other constants are as defined in Theorem \ref{th:first prelim}.
\end{corollary}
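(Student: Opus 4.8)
The plan is to derive the corollary directly from Theorem~\ref{th:first prelim} by supplying a quantitative estimate for the best-approximation quantity $D_{S_{3,\mathbf{t}}}(g(z))$ under the additional smoothness hypothesis $g(z)\in C^{4}[a,b]$. Theorem~\ref{th:first prelim} already furnishes the factored bound
\[
\|g(z)-Tg(z)\|_{\infty}\leq\Bigl(1+C_{1}(N+2)\tfrac{h_{\max}}{h_{\min}}\Bigr)\,D_{S_{3,\mathbf{t}}}(g(z)),
\]
so the entire task reduces to controlling the intrinsic approximation power of the cubic spline space, which is the only object on the right-hand side that still depends on $g$.

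First I would invoke the classical Jackson-type approximation-order result for spline spaces from de~Boor~\cite[Chapter~XII, page~149]{z9}. For a spline space of degree $p$ built on a knot vector with maximum consecutive spacing $h_{\max}$, this result guarantees the existence of a spline $s(z)\in\mathcal{S}_{p,\mathbf{t}}$ (for instance, a suitable quasi-interpolant) satisfying $\|g(z)-s(z)\|_{\infty}\leq C\,h_{\max}^{p+1}\,\|g^{(p+1)}(z)\|_{\infty}$ whenever $g\in C^{p+1}[a,b]$, with $C$ depending only on $p$. Specializing to the cubic case $p=3$ and taking the infimum over $\mathcal{S}_{3,\mathbf{t}}$ yields
\[
D_{S_{3,\mathbf{t}}}(g(z))=\inf_{s(z)\in\mathcal{S}_{3,\mathbf{t}}}\|g(z)-s(z)\|_{\infty}\leq C\,h_{\max}^{4}\,\|g''''(z)\|_{\infty}.
\]

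Finally I would substitute this estimate into the bound of Theorem~\ref{th:first prelim}, absorbing all degree-dependent constants into a single $C>0$, to obtain the claimed inequality~\eqref{eq:app_error_2}. I expect the main obstacle to be a matter of citing the de~Boor estimate in precisely the form required: one must verify that the constant $C$ in the Jackson bound is genuinely independent of the particular knot placement, depending only on the spline degree and not on the ratio $h_{\max}/h_{\min}$, so that the only mesh dependence in the final bound is the explicit $h_{\max}^{4}$ factor together with the $h_{\max}/h_{\min}$ ratio already isolated in Theorem~\ref{th:first prelim}. A secondary point worth checking is that the smoothness hypothesis is stated for the \emph{composite} function $g(z)=f(u(z))$ rather than for $f$ alone; since we assume $g\in C^{4}[a,b]$ directly, the chain-rule regularity of the composition need not be tracked, and $\|g''''(z)\|_{\infty}$ appears as the natural fourth-derivative quantity controlling the error.
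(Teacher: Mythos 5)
Your proposal is correct and follows exactly the paper's route: the paper likewise obtains the corollary by bounding $D_{S_{3,\mathbf{t}}}(g(z))\leq C\,h_{\max}^{4}\,\|g''''(z)\|_{\infty}$ via the Jackson-type estimate from de~Boor \cite[Chapter~XII, page~149]{z9} and substituting this into the bound of Theorem~\ref{th:first prelim}. Your additional caveats (knot-placement independence of the Jackson constant, and that smoothness is assumed directly on the composite $g$) are sensible but do not alter the argument.
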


%%%%%%%%%%%%%%%%%%%%%%%%%%%%%%%%%%%%%%%%%%%%%%%%%%%%%%%%%%%%%%%%%%%%%%%%%%%%%%%%
%Let $t:\ a=t_0<t_1<\cdots<t_N=b$
%be a partition of the interval $[a,b]$, and define the mesh size $h=\max_{0\leq i \leq N-2} (t_{i+1}-t_i)$. Let $p\geq 1$ be a fixed integer, and let $S_{p,\mathbf{t}}$ denote the space of splines of degree $p$ and order $k=p-1$ and with knot sequence $\mathbf{t}$. Let $f$ is increasing on $[a,b]$ and $f\in C^{p+1}[a,b]$ and there exists a constant $C>0$, depending only on $p$ and independent of $h$, then
%\[\inf_{s\in S_{p,\mathbf{t}}}\|f-s\|_\infty\leq C h^{p+1} \|f^{(p+1)}\|_\infty.\]

\noindent All the above results are presented when $S = 0$. However, with $S > 0$, Theorem \ref{th:first prelim} and Corollary \ref{th:second prelim} continue to hold with the exception that $N$ must be replaced by $N-S$ and the parameters $h_{max}$ and $h_{min}$ must be computed using the evaluation indices of non-stragglers. 

\subsection{Error Bounds Comparison with BACC}
\label{subsec:BACC_BSCC}

Although BSCC allows any choice of evaluation points $\{\alpha_i\}_{i=0}^{N-1}$ over $[a,b]$, one can select $\{\alpha_i\}_{i=0}^{N\!-\!1}$ as Chebyshev nodes of the second kind,
defined by $\alpha_i= \cos\!\left(\frac{i\pi}{N}\right)$, for $i\in[N]$. In such a case, with $S$ stragglers, the upper bound in \eqref{eq:app_error_2} reduces to 

\begin{scriptsize}
\begin{equation}
\label{eq:bound_with_cheby_with_strag}
2 C\left(\sin^4\!\left(\frac{(S+1)\pi}{2N}\right)
+ \frac{C_{1}(N-S+2)}{h_{min}}\sin^5\!\left(\frac{(S+1)\pi}{2N}\right)\right)
\|g''''(z)\|_\infty, 
\end{equation}
\end{scriptsize}

\noindent where $h_{min}$ is independent of $S$, and is known to follow $\mathcal{O}(\frac{1}{N^2})$ behavior, for the class of Chebyshev nodes. We recall from \cite[Theorem 9]{b3} that the approximation error of BACC varies with $S$ and $N$ as

\begin{small}
\begin{equation}
\label{eq:Berrut_error}
\bigg(1 + \frac{(1+S)(3+S)\pi^{2}}{4}\bigg)\sin\!\left(\frac{(S+1)\pi}{2N}\right),
\end{equation}
\end{small}

\begin{figure}[ht!]
\centering
\includegraphics[scale = 0.23]{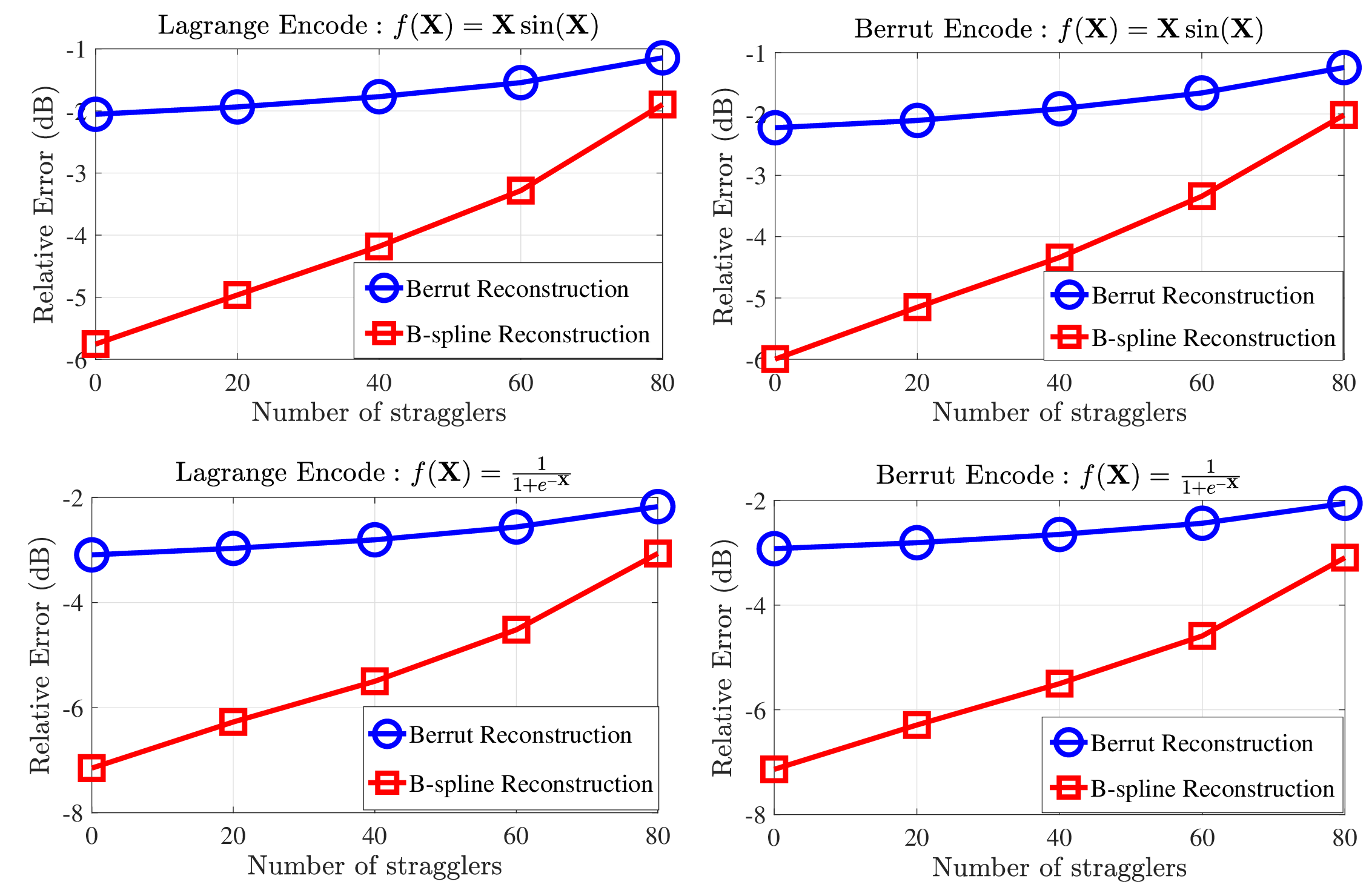}
\caption{Average relative error (in dB) of the BSCC and BACC schemes with Lagrange and Berrut encoding with $N\!=\!100$, $K=8$.}
\label{fig:plots}
\end{figure}

\noindent after excluding all other constants. Provided $g(z)\in C^{4}[a,b]$, using \eqref{eq:bound_with_cheby_with_strag} and \eqref{eq:Berrut_error}, we make the following inferences: With $S = 0$, the error bound of BACC varies as $\mathcal{O}(\frac{1}{N})$, whereas the error bound of BSCC varies as $\mathcal{O}(\frac{1}{N^{2}})$, thereby indicating faster decay of approximation error with the number of workers. For $0 < S << N$, in the regime of large $N$,  the error bound of BACC behaves as $\mathcal{O}(\frac{S^{3}}{N})$, whereas the error bound of BSCC varies as $\mathcal{O}(\frac{S^{5}}{N^{2}})$. This indicates that, for small $S$ and large $N$, the absolute error values of BSCC are expected to be much lower than that of BACC owing to the inverse-quadratic behavior with respect to $N$. Furthermore, along the lines of BACC, for a fixed $N$, we observed that the error bound of BSCC also grows with respect to $S$. In the next section, we attempt to validate these observations using experimental results for arbitrary functions.

\section{Experimental Results}
\label{sec:exp results}
In this section, we present experimental results on the accuracy of BSCC, and compare them with that of BACC \cite{b3}, for different numbers of stragglers. For the experiments, we use two different encoding methods: one based on the Lagrange basis functions \cite{w2} and the other based on the Berrut basis functions \cite{b3}. To demonstrate the results, we implement BSCC and BACC \cite{b3} to compute non-polynomial functions $f(\mathbf{X}) = \mathbf{X}\sin(\mathbf{X})$, $f(\mathbf{X})=\frac{1}{1+e^{-\mathbf{X}}}$. Here, $\mathbf{X}_{j} \in \mathbb{R}^{5\times 5}$ for $j \in [K]$, whose samples are drawn from uniform distribution. We use $\mathbf{Y}_{j} \approx f(\mathbf{X}_{j})$ to denote the approximate computation using BSCC or BACC for $j \in [K]$, and $\mathbf{Y}=f(\mathbf{X}_{j})$ to denote the centralized computation at the master without using BSCC or BACC . We define the relative error introduced by  BSCC and BACC with respect to the centralized computation as $e_{rel} \triangleq \frac{||\mathbf{Y}-\mathbf{Y'}||^{2}}{||\mathbf{Y}||^{2}},$ and finally compute the average relative error (in dB scale) over $10^3$ iterations, and present them in Fig. \ref{fig:plots} by varying $S$ with non-stragglers chosen uniformly random over $N$ in each iterations. 
The parameters used for the experiments are $N=100, K=8$, $p=3$, $\mathbf{X} \in \mathbb{R}^{40\times 5}$, $\{\beta_{j}\}_{j=0}^{K-1}$, and $\{\alpha_{i}\}_{i=0}^{N-1}$ are chosen from Chebyshev nodes of first and second kind, respectively. The plots in Fig.~\ref{fig:plots} confirm that BSCC achieves a remarkable improvement in approximation accuracy over BACC for non-polynomial functions, particularly when the number of stragglers is small, and offers improvement even when the number of stragglers increases. In other words, to maintain a relative error on the order of $10^{-2}$ or $10^{-3}$, BSCC can tolerate a substantially larger number of stragglers compared to BACC.
%%%%%%%%%%%%%%%%%%%%%%%%%%%%%%%%%%%%%%%%%%%%%%%%%%%%%%%%%%%%%%%%%%%%%%%%%%%%%%
%Write summary

%\begin{comment}
\section{Future work}
%In this work, we proposed a new B-spline-assisted coded computing framework that is applicable to the distributed computation of arbitrary functions and resilient against stragglers. Furthermore, we established theoretical bounds on the approximation error of the framework, showing that for certain classes of functions the error decays at a significantly faster rate with the number of workers compared to state-of-the-art coded computing schemes. These theoretical results were further validated through experimental results on arbitrary functions. 
In this work, we have proposed a B-spline-assisted coded computing framework applicable to distributed computation of arbitrary functions. Our future work is to extend the proposed framework to incorporate Byzantine workers, develop robust encoding strategies based on B-splines, and adopt the proposed method to accelerate distributed computational tasks in a wide range of applications, including distributed learning.
%\end{comment}

%%%%%%%%%%%%%%%%%%%%%%%%%%%%%%%%%%%%%%%%%%%%%%%%%%%%%%%%%

%%%%%%%%%%%%%%%%%%%%%%%%%%%%%%%%%%%%%%%%%%%%%%%%%%%%%
%references

%%%%%% add appendix here:

%\clearpage
\begin{appendix}
%\begin{comment}
\subsection{Proof of Proposition \ref{prop:invertibility}}
\label{proof:1}
%\end{comment}
\begin{proof}
For cubic B-spline i.e., $p=3$, the structure of $\mathbf{B}^{(N+2)\times(N+2)}$ defined in \eqref{eq:square_bspline_matrix} is given by,

\begin{small}
\[\mathbf B =\left(
\begin{array}{ccccccc}
a_1 & -b_1 & d_1 & 0 & \cdots & 0 & 0 \\ \cline{1-7}

\multicolumn{1}{|c}{1} & 0 & 0 & 0 & \cdots & 0 & \multicolumn{1}{c|}{0} \\
\multicolumn{1}{|c}{0} & a_3 & b_3 & d_3 & \cdots & 0 & \multicolumn{1}{c|}{0} \\
\multicolumn{1}{|c}{0} & 0 & a_4 & b_4 & d_4 & \cdots & \multicolumn{1}{c|}{0} \\
\multicolumn{1}{|c}{\vdots} & \ddots & \ddots & \ddots & \ddots & \ddots & \multicolumn{1}{c|}{\vdots} \\
\multicolumn{1}{|c}{0} & \cdots & 0 & a_N & b_N & d_N & \multicolumn{1}{c|}{0} \\
\multicolumn{1}{|c}{0} & \cdots & 0 & 0 & 0 & 0 & \multicolumn{1}{c|}{1} \\ \cline{1-7}

0 & 0 & 0 & \cdots & a_{N+2} & -b_{N+2} & d_{N+2}
\end{array}
\right)\;\left.\vphantom{
\begin{array}{c}
1\\0\\0\\\vdots\\0\\0
\end{array}}
\right\}\;\mathbf B_{\mathrm{rect}}
\]\end{small}

\noindent where $a_j,b_j,d_j>0$ for all $j\in\{3,\ldots,N\}$. 
The first and last rows correspond to the natural spline boundary conditions for cubic B-spline, and satisfy $a_j-b_j+d_j=0$ for $j\in\{1,N+2\}$, which enforce $s''(\alpha_0)=\sum_{j=0}^{N+1} C_{j,3}\, B''_{j,3}(\alpha_0)=0$, and $s''(\alpha_{N-1})=\sum_{j=0}^{N+1} C_{j,3}\, B''_{j,3}(\alpha_{N-1})=0.$ The boxed interior rows i.e., from second to $N+1$-th row, constitute the rectangular matrix $\mathbf{B}_{\mathrm{rect}}$ and correspond to the interpolation conditions at the evaluation points $\{\alpha_{i}\}_{i=0}^{N-1}$. These rows satisfy $a_{j}+b_{j}+d_{j}=1$, for $j\in\{3,\ldots,N\}$, which comes from the partition of unity property of the B-spline basis functions, i.e., $\sum_{j=0}^{N+1} B_{j,3}(z)=1,\forall z\in [a,b]$. Finally, the second and $N+1$-th rows are attributed due to the use of a clamped knot vector. 

%are attributed due to

The boxed sub matrix $\mathbf B_{\mathrm{rect}}$ consists of rows $2$ to $N+1$ of $\mathbf B$, corresponds to the interpolation conditions at the distinct and ordered evaluation points $\{\alpha_{i}\}_{i=0}^{N-1}$, i.e., $s(\alpha_{i})=f(u(\alpha_{i}))$, and it has full row rank $N$ \cite[Chapter XIII]{z9}. Hence, the interior rows of $\mathbf B$ are linearly independent. The first row of $\mathbf B$ arises from the natural spline boundary condition $s''(\alpha_0)=0$ and has the form $(a_1,-b_1,d_1,0,\ldots,0)$, where $a_1,b_1,d_1>0$ and $a_1-b_1+d_1=0$. In contrast, each interior row is a linear combination of several B-spline basis functions evaluated at $\{\alpha_{i}\}_{i=0}^{N-1}$ and therefore contains only non-negative entries, owing to the non-negativity property of B-spline basis functions i.e., $B_{j,3}(z)\geq 0,\forall z\in[a,b]$. Since the first row contains a strictly negative entry in a column where all interior rows are non-negative, it cannot be expressed as a linear combination of the interior rows and thus linearly independent of the rows of $\mathbf B_{\mathrm{rect}}$. A similar argument applies to the last row of $\mathbf B$, which enforces the boundary condition $s''(\alpha_{N-1})=0$ and has the form $(0,\ldots,0,a_{N+2},-b_{N+2},d_{N+2})$ with a strictly negative entry. Further, the two boundary rows are linearly independent of each other as their nonzero entries are present in the disjoint columns i.e., at near opposite ends of the matrix. Therefore, all $N+2$ rows of $\mathbf B$ are linearly independent, and therefore $\operatorname{rank}(\mathbf B)=N+2$. Since  $\mathbf{B}$ is square matrix of dimension $(N+2)\times(N+2)$, it follows that $\mathbf B$ is invertible. This completes the proof.
\end{proof}

%interior and 

%%%%%%%%%%%%%%%%%%%%%%%%%%%%%%%%%%%%%%%%%%%%%%%%%
\subsection{Proof of Proposition \ref{lem:lebesgue}}
\label{proof:2}

\begin{proof}
We consider the  BSCC setting, when $S=0$. Let $g(z)$ to refer $f(u(z))$, which denote the composite function. Since, when $S=0$, the master receives function evaluations from all $N$ workers. In particular, worker $i$, for $i=0,1,\ldots,N-1$, evaluates $g(z)$ at the point $\alpha_i$ and returns the value $g(\alpha_i)=f(u(\alpha_i))$. The master reconstructs an approximation of $g(z)$ using a cubic B-spline interpolant as

\begin{equation}
\label{eq:spline_approx}
Tg(z)= \sum_{j=0}^{N+1} c_j B_{j,3}(z),
\end{equation}

\noindent where $T$ is the underlying B-spline interpolation operator that approximates $g(z)$ to $Tg(z)\in\mathcal{S}_{3,\mathbf{t}}$, $\{B_{j,3}(z)\}_{j=0}^{N+1}$ are cubic B-spline basis functions associated with the clamped knot vector $\mathbf{t}$, and
\[\mathbf{c} =\begin{bmatrix}C_{0,3} & C_{1,3} & \cdots & C_{N+1,3}\end{bmatrix}^T\in \mathbb{R}^{(N+2)\times 1}\] \noindent is the vector of spline coefficients. The coefficients $\mathbf{c}$ are determined by the interpolation conditions

\[Tg(\alpha_i)=g(\alpha_i), \quad i=0,1,\ldots,N-1,\]

together with the natural cubic spline boundary conditions
\[Tg''(\alpha_0)=0, \qquad Tg''(\alpha_{N-1})=0.\]

These conditions yield the square linear system
\[\mathbf{B}\mathbf{c} = \bar{\mathbf{f}},\]

\noindent where $\mathbf{B}\in\mathbb{R}^{(N+2)\times(N+2)}$ is the cubic B-spline matrix defined in \eqref{eq:square_bspline_matrix}, and
\[\bar{\mathbf{f}} =
\begin{bmatrix}0 & g(\alpha_0) & g(\alpha_1) & \cdots & g(\alpha_{N-1}) & 0\end{bmatrix}^T\in \mathbb{R}^{(N+2)\times 1}.
\]

Since $\mathbf{B}$ is invertible, hence

\[\mathbf{c} = \mathbf{B}^{-1}\bar{\mathbf{f}}.\] 

Substituting the above expression for $\mathbf{c}$ in \eqref{eq:spline_approx}, we obtain

\begin{equation}
\label{eq:Tg(z) whole exp}
\begin{aligned}
Tg(z)&= \sum_{j=0}^{N+1} c_j B_{j,3}(z) \\
&= \sum_{j=0}^{N+1}
\left(\sum_{k=0}^{N+1} (\mathbf{B}^{-1})_{jk}\,\bar{f}_k\right)B_{j,3}(z),
\end{aligned}
\end{equation}
where $\bar{f}_k$ denoted the $k$-th entry of $\bar{\mathbf{f}}$. Interchanging the order of summation in \eqref{eq:Tg(z) whole exp}, we obtain
\begin{equation}
\label{eq:phi_k_define}
Tg(z)= \sum_{k=0}^{N+1}\left(\sum_{j=0}^{N+1} (\mathbf{B}^{-1})_{jk} B_{j,3}(z)\right)\bar{f}_k.
\end{equation}
We denote the inner sum of \eqref{eq:phi_k_define} by
\begin{equation}
\label{eq:phi_k basis}
 \Phi_k(z):= \sum_{j=0}^{N+1} (\mathbf{B}^{-1})_{jk} B_{j,3}(z),\qquad k=0,\ldots,N+1,
 \end{equation}
 where the functions $\{\Phi_k(z)\}_{k=0}^{N+1}$ constitute the spline basis functions. 
Therefore we express \eqref{eq:phi_k_define}, in terms of $\{\Phi_k(z)\}_{k=0}^{N+1}$ as
 
\begin{equation}
\label{eq:phi_k}
Tg(z) = \sum_{k=0}^{N+1} \Phi_k(z)\,\bar{f}_k.
\end{equation}
Since $\bar{f}_0=\bar{f}_{N+1}=0$ due to the natural boundary conditions, and $\bar{f}_{k}=g(\alpha_{k-1})$ , for $k=1,2,\ldots,N$, \eqref{eq:phi_k} reduces to

\[Tg(z)=\sum_{k=1}^{N} \Phi_k(z)\, g(\alpha_{k-1}).\]
% %%%%%%%%%%%%%%%%%%%%%%%%%%%%%%%%%%%%%%%%%%%%%%%%%%%%%%%%%%%%%%

\noindent We are now interested to bound $|Tg(z)|$, which implies
\begin{equation}
\label{eq:||Tg(z)||}
|Tg(z)|\leq\sum_{k=1}^{N} |\Phi_k(z)|\, |g(\alpha_{k-1})|,
\end{equation} 

where $|g(\alpha_{k-1})| \leq \,\|g\|_\infty \qquad \forall k.$

Therefore, we use \eqref{eq:phi_k basis}, and bound $\sum_{k=1}^N |\Phi_k(z)|$
\begin{equation*}
\sum_{k=1}^{N} |\Phi_k(z)|=\sum_{k=1}^{N}\left|\sum_{j=0}^{N+1} (\mathbf{B}^{-1})_{jk} B_{j,3}(z)\right|.
\end{equation*}

Further, we apply the triangular inequality in the above expression, and obtain
\begin{equation*}
\sum_{k=1}^{N} |\Phi_k(z)|\leq
\sum_{k=1}^{N}\sum_{j=0}^{N+1}\left|(\mathbf{B}^{-1})_{jk} B_{j,3}(z)\right|.
\end{equation*}

Now, we use the B-spline basis functions non-negativity property defined in Definition \ref{def:cox de boor} i.e., $B_{j,3}(z)\geq 0$ for all $j\in \{0,1,\ldots,N+1\}$ and $z\in [a,b]$, we obtain
\begin{equation*}\sum_{k=1}^{N} |\Phi_k(z)|
\leq\sum_{k=1}^{N}\sum_{j=0}^{N+1}|(\mathbf{B}^{-1})_{jk}|\, B_{j,3}(z),
\end{equation*}
which implies
\begin{equation*}
\sum_{k=1}^{N} |\Phi_k(z)|
\leq \sum_{j=0}^{N+1}B_{j,3}(z)\sum_{k=1}^{N}|(\mathbf{B}^{-1})_{jk}|,
\end{equation*} and can be upper bounded as
\begin{equation}
\label{eq:final PHI_K}
\leq \bigg(\max_{0\leq j\le N+1}\sum_{k=1}^{N}\big|(\mathbf{B}^{-1})_{jk}\big|\bigg)\sum_{j=0}^{N+1}B_{j,3}(z).
\end{equation}
Now, we apply the partition of unity property of B-spline basis functions defined in Definition  \ref{def:spline_space_bspline} i.e., $\sum_{j=0}^{N+1} B_{j,3}(z)=1$, in \eqref{eq:final PHI_K}, and obtain
\begin{equation}
\label{eq:B_inf}
\sum_{k=1}^{N} |\Phi_k(z)|\leq \max_{0\leq j\leq N+1}\sum_{k=1}^{N}|(\mathbf{B}^{-1})_{jk}|=\|[\mathbf{B}^{-1}]_{\mathrm{sub}}\|_\infty.
\end{equation}
 where $[\mathbf{B}^{-1}]_{\mathrm{sub}}\in\mathbb{R}^{(N+2)\times N}$ denote the sub matrix of $\mathbf{B}^{-1}$ obtained by removing its first and last columns. Finally, we use the bound obtained for $\sum_{k=1}^{N} |\Phi_k(z)|$ in \eqref{eq:B_inf}, and substitute it in \eqref{eq:||Tg(z)||}. Consequently, we get

\begin{equation*}
|Tg(z)|\leq\sum_{k=1}^{N} |\Phi_k(z)||g(\alpha_{k-1})|
\leq\|[\mathbf{B}^{-1}]_{\mathrm{sub}}\|_\infty \|g\|_\infty.
\end{equation*}

\noindent Lastly, we use the definition of operator norm given in Definition \ref{def:operator_norm}, and obtain the bound on operator norm  as \begin{equation*}
\|T\|\leq\|[\mathbf{B}^{-1}]_{\mathrm{sub}}\|_\infty.
\end{equation*}
This completes the proof.
\end{proof}

\begin{comment}
In the presence of straggler $s$, \eqref{eq:spline projecetion} can be written as 
\begin{equation}
\label{eq:spline projecetion_straggler}
T_{s}f(z)=s(z)=\sum_{j=0}^{L'-1} c_j\, B_{j,p}(z),
\end{equation} where $L'=M+p-1=M+2$, such that $M=N-s$, and $T_{s}$ be the corresponding spline interpolation operator, when $s>0$. Therefore, when $s>0$, the above proof can be generalized and operator norm $||T_{s}||$ can be bounded as  \[\|T_s\| \leq \|\mathbf{B}_{s}^{-1}]_{\mathrm{sub}}\|_\infty.\] 
\end{comment}

%%%%%%%%%%%%%%%%%%%%%%%%%%%%%%%%%%%%%%%%%%%%%%%%%%%%%%%%%%%%%%
\subsection{Proof of Theorem \ref{thm:lebegue_bound}}
\label{proof:3}
\begin{proof}
Let
$\mathbf{B}_{\mathrm{rect}} \in \mathbb{R}^{N \times (N+2)}$ be the rectangular matrix defined in \eqref{eq:B_rect} with full row-rank, i.e., $\operatorname{rank}(\mathbf{B}_{\mathrm{rect}})=N.$
Also, let $\mathbf{B}\in\mathbb{R}^{(N+2)\times (N+2)}$ be the square matrix defined in \eqref{eq:square_bspline_matrix} obtained from
$\mathbf{B}_{\mathrm{rect}}$ by appending two boundary rows (one at the top and one at the bottom), such that $\mathbf{B}$ is invertible. Now define the subspace
    \[
    \mathbf{V} \triangleq \mathcal{N}(\mathbf{B}_{\mathrm{rect}})
    = \left\{ \mathbf{x} \in \mathbb{R}^{N+2} : \mathbf{B}_{\mathrm{rect}} \mathbf{x} = 0 \right\}.
    \]
Since \(\mathbf{B}_{\mathrm{rect}}\) is full row-rank, $\dim(\mathbf{V}) = 2.$
Let \(\mathbf{W} \subset \mathbb{R}^{N+2}\) be the complementary subspace of \(\mathbf{V}\), i.e.,
\[
\mathbb{R}^{N+2} = \mathbf{V} \oplus \mathbf{W},
\qquad
\mathbf{V} \cap \mathbf{W} = \{0\}.
\]
Thus, every vector \(\mathbf{c} \in \mathbb{R}^{N+2}\) admits a unique decomposition $\mathbf{c} = \mathbf{v} + \mathbf{w},$ such that $\mathbf{v} \in \mathbf{V},\; \mathbf{w} \in \mathbf{W}.$ As a consequence, the transformation $\mathbf{B}_{\mathrm{rect}}$ when restricted of $\mathbf{W}$ is injective. For the matrix, $\mathbf{B}^{-1}$, let $[\mathbf{B}^{-1}]_{sub}$ denote its submatrix by omitting the first and the last column. For $\mathbf{f}\in\mathbb{R}^{N}$, define the zero-padded extension
\[
\tilde{\mathbf{f}}=[0,\mathbf{f},0]^{\top}\in\mathbb{R}^{N+2},
\]
and set
\[
[\mathbf{B}^{-1}]_{\mathrm{sub}}\mathbf{f}
\triangleq
\mathbf{B}^{-1}\tilde{\mathbf{f}},
\]
which can be verified in a straightforward manner. By construction of the boundary rows, the interior operator satisfies the condition that for every non-zero $\mathbf{f} \in \mathbb{R}^{N}$, 
\begin{equation}
\mathbf{B}_{\mathrm{rect}}\mathbf{B}^{-1}\tilde{\mathbf{f}} = \mathbf{B}_{\mathrm{rect}}[\mathbf{B}^{-1}]_{sub}\mathbf{f} = \mathbf{f}.
\label{eq:interior_identity}
\end{equation} 
Using the above details, we infer that for every non-zero vector \( \mathbf{f} \in \mathbb{R}^{N} \), which corresponds to the evaluations at the knots, the vector $[\mathbf{B}^{-1}]_{sub}\mathbf{f} \in \mathbf{W}$, and moreover, $[\mathbf{B}^{-1}]_{sub}$ is an invertible transformation from $\mathbb{R}^{N}$ to $\mathbf{W}$.\footnote{We restrict our domain to $\mathbf{W}$ since the family of spline functions generated by our interpolation operator have their coefficients belonging to $\mathbf{W}$ owing to $[\mathbf{B}^{-1}]_{sub}$.} Furthermore, let $P_{inv}([\mathbf{B}^{-1}]_{sub})$ represent its psuedo-inverse, which is only defined to act on $\mathbf{W}$. By definition, the minimum singular value of $P_{inv}([\mathbf{B}^{-1}]_{sub})$, when restricted to $\mathbf{W}$, is given by 
\[
\sigma_{\min}(P_{inv}([\mathbf{B}^{-1}]_{sub}))
\;\triangleq\;
\inf_{\mathbf{c}\neq 0, \mathbf{c} \in \mathbf{W}}
\frac{\|P_{inv}([\mathbf{B}^{-1}]_{sub})\mathbf{c}\|_{F}}{\|\mathbf{c}\|_{F}},
\]
which can be alternatively written as
\begin{equation}
\label{eq:singular_value_exp}
\sigma_{\min}(P_{inv}([\mathbf{B}^{-1}]_{sub}))
=
\inf_{\mathbf{c}\neq 0, \mathbf{c} \in \mathbf{W}}
\frac{\|\mathbf{B}_{rect}\mathbf{c}\|_{F}}{\|\mathbf{c}\|_{F}},
\end{equation}
where the above equality follows from \eqref{eq:interior_identity}. Here $\|\cdot\|_{F}$ denotes the Frobenius norm operator.

Using the standard inequality connecting the infinite norm and the Frobenious norm of a matrix, together with the fact that the Frobenius norm of the inverse operator is bounded by the reciprocal of the smallest singular value,
we obtain

\begin{eqnarray}
\label{eq:sing_norm_connection}
\|[\mathbf{B}^{-1}]_{sub}\|_{\infty} & \leq &  
\sqrt{N+2}\|[\mathbf{B}^{-1}]_{sub}\|_{F}, \nonumber\\
& \leq & \frac{N +2}{\sigma_{\min}(P_{inv}([\mathbf{B}^{-1}]_{sub}))},
\end{eqnarray}
wherein the first inequality follows from the well known connection between the infinity norm and Frobenius norm of a matrix, whereas the second inequality is because of the relation between the Frobenius norm of a matrix and the minimum singular value of its inverse. The composition of the two bounds gives $\sqrt{N+2}\sqrt{N}$ in the numerator, which is further bounded as $(N+2)$. 

For every \(\mathbf{c} \in \mathbf{W}\), we now proceed to lower bound the squared norm \(\|\mathbf{B}_{rect}\mathbf{c}\|_{F}\)
using properties of spline interpolation.
To this end, let \( \{t_j\} \) denote the knot sequence associated with the spline basis, and for a spline function $s(z)$ generated by the coefficient
vector \(\mathbf{c}\), define $\mathbf{f}_j \;\triangleq\; s(t_j),$ i.e., $\mathbf{f}_j$ denotes the evaluation of the spline function at the knot
locations. Since the action of the matrix \(\mathbf{B}_{\mathrm{rect}}\) corresponds to
sampling the spline (or its derivatives) at the knot values, the vector
\(\mathbf{B}_{\mathrm{rect}}\mathbf{c}\) can be identified with the collection
of evaluations \(\{\mathbf{f}_j\}\).
Consequently, the squared norm satisfies $\|\mathbf{B}_{\mathrm{rect}}\mathbf{c}\|^{2}_{F} = \sum_{j} |\mathbf{f}_j|^{2}$.

We now use classical spline interpolation results to lower bound the sum of
squared evaluations \(\sum_j |\mathbf{f}_j|^2\) in terms of the mesh parameters. First, the energy of the spline function over its entire support can be lower
bounded by a term, which is a product of constant that depends only on the spline order, the minimum spacing $h_{\min}$ of
the knot sequence, and the squared norm of the coefficient vector used to generate the spline function \cite[Section 2.1]{FEP}. That is, for a spline function \(s(z)\) with coefficient vector
\(\mathbf{c}\), there exists a constant \(C_3>0\), which depends only on the spline
order, and the minimum knot spacing parameter $h_{\min}$, such that
\begin{equation}
\label{eq:spline_energy_lower_bound}
\int_{\mathrm{supp}(s(z))} |s(z)|^2 \, dz \;\ge\; C_3 h_{\min} \, \|\mathbf{c}\|^2_{F}.
\end{equation}

Subsequently, the energy of the spline function over its entire support can be upper bounded in terms of the
discrete evaluations at the knots. In particular, there exists a constant \(C_2>0\), depending
only on the spline order, and the maximum knot spacing $h_{max}$, such that
\begin{equation}
\label{eq:spline_energy_upper_bound}
\int_{\mathrm{supp}(s(z))} |s(z)|^2 \, dz \;\le\; C_2 h_{max} \sum_j |\mathbf{f}_j|^2.
\end{equation}
The above bound can be derived by first partitioning the support set of spline function as a union of consecutive intervals of knots. Then the energy under one interval can be upper bounded by using the discrete evaluation of the function in that interval and the width of the interval. Subsequently, adding the bounds on the energies in each interval, and then  bounding the interval size by the maximum of the intervals, we get the above bound. 
Combining these two results, we obtain a lower bound
\[
\sum_j |\mathbf{f}_j|^2 \;\geq\; \frac{C_{3}h_{min}}{C_{2}h_{max}} \, \|\mathbf{c}\|^2.
\]
Finally, by normalizing the coefficients so that \(\|\mathbf{c}\|^2 = 1\), the
lower bound depends solely on constants determined by the spline order and the knot spacing:
\[
\sum_j |\mathbf{f}_j|^2 \;\ge\; \frac{C_3}{C_2} \frac{h_{min}}{h_{max}}.
\]

Finally, using the above bound in \eqref{eq:singular_value_exp}, we get
\begin{equation*}
\sigma_{\min}(P_{inv}([\mathbf{B}^{-1}]_{sub}))
= \inf_{\mathbf{c}\neq 0, \mathbf{c} \in \mathbf{W}}
\frac{\|\mathbf{B}_{rect}\mathbf{c}\|_{F}}{\|\mathbf{c}\|_{F}} \;\ge\; \frac{C_3}{C_2} \frac{h_{min}}{h_{max}},
\end{equation*}
\noindent and then using \eqref{eq:singular_value_exp} in \eqref{eq:sing_norm_connection}, we get 
\[\|[\mathbf{B}^{-1}]_{sub}\|_{\infty} \;\leq\; (N + 2) \frac{C_2}{C_3} \frac{h_{max}}{h_{min}}.\\ \leq\; {C_1} (N + 2)\frac{h_{max}}{h_{min}},\] where $C_{1}=\frac{C_{2}}{C_{3}}>0$, and is a constant depends on only on the degree of the spline. This completes the proof. 
\end{proof}
%%%%%%%%%%%%%%%%%%%%%%%%%%%%%%%%%%%%%%%%%%%%%%%%%%%%%%%%%%%%%%%%%%

\subsection{Proof of Theorem \ref{th:first prelim}}
\label{proof:4}
We use the results from \cite[Chapter XIII, page 181]{z9} to present an upper bound on the approximation error of BSCC using the upper bound on operator norm $\|[\mathbf{B}^{-1}]_{sub}\|_{\infty}$. Hence, this proof is an immediate results of Proposition \ref{lem:lebesgue}, and Theorem~\ref{thm:lebegue_bound}.
%%%%%%%%%%%%%%%%%%%%%%%%%%%%%%%%%%%%%%%%%%%%%%%%%%%%%%
\end{appendix}
\end{document}